\DeclareMathOperator{\lrs}{\lambda_{\mbox{\scriptsize LR}}}
\theoremstyle{plain}
\newtheorem{theorem}{Theorem}[section]
\newtheorem{lemma}[theorem]{Lemma}
\newtheorem{corollary}[theorem]{Corollary}
\theoremstyle{definition}
\newtheorem{definition}[theorem]{Definition}
\theoremstyle{remark}
\newtheorem{remark}[theorem]{Remark}
\title{Asymptotic distribution of the likelihood ratio test statistic with inequality-constrained nuisance parameters}
\author{Clara Bertinelli Salucci \\[1em] Faculty of Mathematics and Natural Science, Department of Mathematics, \\ University of Oslo, 35 Moltke Moes Vei, Oslo 0316, Norway \\
\href{mailto:clarabe@math.uio.no}{clarabe@math.uio.no}}
\date{} % leave empty for no date
\begin{document}
\maketitle

\begin{abstract}
\noindent The asymptotic distribution of the likelihood-ratio statistic for testing
parameters on the boundary is well known to be a chi-squared mixture. The mixture weights have been shown to correspond to the intrinsic volumes of an associated tangent cone, unifying a wide range of previously isolated special cases.  While the weights are fully understood for an arbitrary number of parameters of interest on the boundary, much less is known when nuisance parameters are also constrained to the boundary, a situation that
frequently arises in applications. We provide the first general characterization of the asymptotic distribution of the likelihood-ratio test statistic when both the number of parameters of interest and the number of nuisance parameters on the boundary are arbitrary.
We analyze how the cone geometry changes when moving from a problem with
$K$ parameters of interest on the boundary to one with $K-m$ parameters of interest and $m$ nuisances.  In the orthogonal case we show that
the resulting change in the chi-bar weights admits a closed-form
difference pattern that redistributes probability mass across adjacent
degrees of freedom, and that this pattern remains the dominant component
of the weight shift under arbitrary covariance structures when the
nuisance vector is one-dimensional.  For a generic number of nuisance
parameters, we introduce a new rank-based aggregation of intrinsic
volumes that yields an accurate
approximation of the mixture weights.  Comprehensive simulations support the theory and demonstrate the accuracy of the proposed approximation.
\end{abstract}

\paragraph*{Keywords}
Nonstandard asymptotics; boundary conditions; chi-squared mixtures; nuisance parameters; conic intrinsic volumes. 

\section{Introduction}

The study of the asymptotic distribution of likelihood ratio test statistics $\lrs$ under
inequality constraints has its roots in the work of \citet{Chernoff1954}, who
showed that, under mild regularity conditions, the statistic for testing a simple null hypothesis against a convex cone alternative can be represented as the squared distance of a Gaussian vector to a closed convex cone. The asymptotic distribution is then a finite mixture of
chi-squared contributions, typically denoted by $\bar\chi^2$ (chi-bar-squared) \citep{Kudo1963}. Early explicit closed-form expressions for the weights in special polyhedral configurations, such as one-sided or orthant-type alternatives, appeared in subsequent contributions, e.g. \citep{Bartholomew}, where the mixture weights were obtained via direct calculations involving multivariate normal probabilities. Specific formulas for a few configurations were later given by \citet{SelfLiang1987}, who obtained the corresponding weights via purely geometric arguments involving the structure of the underlying tangent cones.

Up to that point, the available results were largely confined to isolated special cases. A major conceptual shift occurred with the seminal work of \citet{Shapiro1985, Shapiro1988}, who showed that the mixture weights admit a general geometric expression as Gaussian probabilities attached to the faces of a tangent cone and its polar, an insight that unified the diverse collection of isolated examples into a coherent geometric picture. Subsequent work has considerably deepened the geometric and probabilistic foundations of this theory. \citet{kuriki_takemura} linked Shapiro's facewise Gaussian-probability representation to the intrinsic volumes of polyhedral cones, providing the first explicit identification of chi-bar weights with conic intrinsic volumes, and further extended the theory to smooth or piecewise smooth cones. A major conceptual advance was made by \citet{amelunxen2014}, who established fundamental probabilistic identities linking intrinsic volumes to Gaussian projections onto convex cones. Building on these ideas, \citet{AmelunxenLotz2017} developed a detailed combinatorial theory for intrinsic volumes of polyhedral cones, proving explicit facewise decompositions that match the Gaussian internal and external angles appearing already in Shapiro's work.
Together, these developments transformed Shapiro's original insight into a broad and versatile geometric framework for analysing boundary problems in the context of likelihood ratio tests through the lens of convex conic geometry.

Despite this progress, the role of nuisance parameters has remained largely absent from this unified geometric treatment; yet in most applied settings -- from particle physics \citep{hep} and cosmology \citep{cosmology} to econometrics \citep{econometrics} and biostatistics \citep{biostat} -- such parameters, often abundant, are unavoidable and ought to be incorporated in any truly unified theory. To our knowledge, only \citet{SelfLiang1987} and \citet{KS} investigate the implications of having nuisance parameters on the boundary, but their analysis is limited to at most one-two nuisance parameters together with a single parameter of interest also on the boundary, and it does not return to the intrinsic-volume or angle-based representations mentioned above.

In this work we expand the general scope implied by Shapiro’s original contribution by making three contributions. First, we derive an exact
characterization of the effect of demoting parameters of interest to boundary nuisance parameters in the orthogonal case, starting from an arbitrary number of parameters of interest. In particular, we obtain a closed-form description of the resulting $\bar{\chi}^2$ weights, which exhibit a symmetric redistribution of probability mass across adjacent components. This yields a precise geometric representation of how the null cone changes when one or, more generally, any
number $m<K$ of parameters on the boundary are treated as nuisance.

Second, we establish a stability property of the weights under covariance
perturbations. For any polyhedral cone, we show that the orthogonal difference pattern remains the leading-order term when the covariance departs from identity under positive correlation. The discrepancy between the correlated and orthogonal weights is bounded by a constant multiple of the cone dimensionality and by an anisotropy index measuring the deviation of the transformed cone from orthogonality. This motivates the use of the orthogonal difference pattern to approximate the $\bar{\chi}^2$ weights in non-orthogonal settings when a parameter is demoted to nuisance.

Third, since this approximation may deteriorate when multiple parameters are demoted to nuisance, as the geometric deformation becomes dominated by the interaction between the nuisance coordinates and the covariance
structure, we introduce a new rank-based aggregation of intrinsic volumes obtained via Schur complements of the Fisher information matrix. This construction yields accurate and efficient approximate $\bar{\chi}^2$ weights for arbitrary numbers of boundary nuisances, with particularly strong performance when the number of parameters of interest is moderate (approximately up to $6$-$7$).

All results, both theoretical developments and numerical investigations, are developed within the locally asymptotically normal framework. Comprehensive simulations confirm the exactness of the lemmas derived in the orthogonal case and demonstrate the effectiveness of the proposed approximations in the presence of correlated parameters.

\section{Background and notation} \label{sec:background} 

Let $\{f(x;\theta): \theta \in \mathbb{R}^K\}$ be a regular parametric family, and let $(X_1,\ldots,X_n)$ be i.i.d.\ with density $f( \,\, \cdot \,\,;\theta_0)$ for some true parameter value $\theta_0 \in \mathbb{R}^K$. Let $P_\theta$ denote the distribution of a single observation with density $f(\,\, \cdot \,\,;\theta)$, and let $P_\theta^n$ be the joint distribution of the sample. Denote by \( \ell_n(\theta)=\sum_{i=1}^n \log f(X_i;\theta) \) the log-likelihood function, let \( \ell'_{\theta_0}(X) = \frac{\partial}{\partial \theta}\,\ell(X;\theta)\big|_{\theta=\theta_0} \) be the score function at $\theta_0$ for a single observation, and let \( I(\theta_0) = -\,\mathbb{E}_{\theta_0}\!\left[ \frac{\partial^2}{\partial \theta\,\partial \theta^\top} \ell(X;\theta)\vert_{\theta=\theta_0} \right] \) be the expected Fisher information matrix at $\theta_0$. Under standard differentiability, identifiability, and Fisher-information regularity conditions, the model is locally asymptotically normal (LAN) at $\theta_0$ \citep{LeCam1960,vanDerVaart1998}. In particular, letting \(W_n=\frac{1}{\sqrt{n}} \sum_{i=1}^n \ell'_{\theta_0}(X_i), \) we have $W_n \xrightarrow{d} W \sim N_K\bigl(0,I(\theta_0)\bigr)$ under $P_{\theta_0}$, and uniformly for $h$ in compact subsets of $\mathbb{R}^K$, 
\begin{equation} \ell_n\!\left(\theta_0 + \frac{h}{\sqrt{n}}\right) - \ell_n(\theta_0) = h^\top W_n - \tfrac{1}{2} h^\top I(\theta_0) h + o_{P_{\theta_0}}(1). \label{eq:LAN} \end{equation} 

\noindent Eq.~\eqref{eq:LAN} shows that, under the local reparametrization $\theta=\theta_0 + h/\sqrt{n}$, the log-likelihood process converges to that of a Gaussian shift experiment with log-likelihood ratio \( \Lambda(h;W)=h^\top W - \tfrac{1}{2} h^\top I(\theta_0) h, \) so that inference in a neighbourhood of $\theta_0$ can be studied via the geometry of this limiting quadratic form. The LAN expansion also yields the asymptotic normality of the maximum likelihood estimator (MLE) $\hat\theta_n$, as \( Z_n := \sqrt{n}\,(\hat\theta_n-\theta_0) = I^{-1}(\theta_0)\,W_n + o_{P_{\theta_0}}(1)\;\xrightarrow{d}\; Z \sim N_K\!\bigl(0, I^{-1}(\theta_0)\bigr). \) Equivalently, expressing the limiting log-likelihood ratio in terms of
$Z$, we may write
\(
\Lambda(h;Z)=h^\top I(\theta_0)Z-\tfrac12 h^\top I(\theta_0)h.
\) 
Therefore, following \citet{SelfLiang1987}, this boundary problem is asymptotically equivalent to estimating the restricted mean of a single Gaussian observation \( Z \sim N_K\!\bigl(0,I^{-1}(\theta_0)\bigr) \), with the MLE given by the projection of $Z$ onto the appropriate cone of admissible means. 

\medskip 

Suppose that the parameter space is subject to inequality constraints of the form $\theta_i \ge 0$ for $i = 1, \ldots, K$, and that under the null we have $\theta_{0,i}=0$ for all $i=1,\ldots,K$. In the local coordinate $h = \sqrt{n}(\theta - \theta_0)$, these constraints define a convex cone 
\( C = \{\,h \in \mathbb{R}^K : h_i \ge 0 \text{ for all } i\,\}\). Let $C_0 \subseteq C$ denote the cone associated with the null hypothesis $\theta_i = 0$ for $i = 1, \ldots, K$, which, in this fully constrained boundary case, reduces to $C_0 = \{0\}$. As mentioned, in the limiting Gaussian experiment the MLE under $H_1$
(resp.\ $H_0$) is obtained by maximizing the limiting log-likelihood
ratio $\Lambda(h;Z)$ over $h \in C$ (resp.\ $h \in C_0$), which is
equivalently given by the $I(\theta_0)$-orthogonal projection of $Z$
onto $C$ (resp.\ $C_0$). Indeed, since
\[
\Lambda(h;Z)
=
h^\top I(\theta_0) Z - \tfrac12 h^\top I(\theta_0) h
=
\tfrac12 \|Z\|_{I(\theta_0)}^2
-
\tfrac12 \|Z-h\|_{I(\theta_0)}^2,
\]
we have
\begin{align*}
\sup_{h \in C} \Lambda(h;Z)
=
\tfrac12 \|Z\|_{I(\theta_0)}^2
-
\tfrac12 &\inf_{h \in C} \|Z-h\|_{I(\theta_0)}^2
=
\tfrac12 \|P_C(Z)\|_{I(\theta_0)}^2 \\ \sup_{h \in C_0} \Lambda(h;Z)
&=
\tfrac12 \|P_{C_0}(Z)\|_{I(\theta_0)}^2,
\end{align*}

\noindent where $P_C(Z)$ denotes the $I(\theta_0)$-orthogonal projection of $Z$
onto $C$, and $\|x\|_{I(\theta_0)}^2 = x^\top I(\theta_0) x$.
The likelihood-ratio test statistic therefore admits the asymptotic
representation
\begin{equation}
\lrs
=
2\Big\{
\sup_{h \in C} \Lambda(h;Z)
-
\sup_{h \in C_0} \Lambda(h;Z)
\Big\}
=
\|P_C(Z)\|_{I(\theta_0)}^2
-
\|P_{C_0}(Z)\|_{I(\theta_0)}^2,
\label{eq:projectionI}
\end{equation}
which coincides with Equation~(3.1) of \citet{SelfLiang1987} written in projection form.

For analytical convenience, Self \& Liang perform a linear transformation that converts the $I(\theta_0)$-inner product into the ordinary Euclidean one. Let \( I(\theta_0) = P D P^{\top} \) be the spectral decomposition of the information matrix, with $P$ orthogonal matrix and $D=\mathrm{diag}(d_1,\ldots,d_K)$ a positive diagonal matrix of eigenvalues. Define the transformed variables 
\begin{equation} \tilde Z = D^{1/2} P^{\top} Z, \qquad \tilde h = D^{1/2} P^{\top} h, \label{eq:whitening} \end{equation}
\noindent so that $\tilde Z \sim N_K(0, \mathbb{I}_K)$, where $\mathbb{I}_K$ is the $K$-dimensional identity matrix. This linear mapping makes the inner product become Euclidean: \[ \langle x, y \rangle_{I(\theta_0)} = x^{\top} I(\theta_0) y = (D^{1/2} P^{\top} x)^{\top}(D^{1/2} P^{\top} y) = \langle \tilde x, \tilde y \rangle. \] \noindent The corresponding transformed cones are 

\begin{equation} \tilde C = \{ D^{1/2} P^{\top} h : h \in C \}, \qquad \tilde C_0 = \{ D^{1/2} P^{\top} h : h \in C_0 \}, \label{eq:transformed_cones} \end{equation} 

\noindent and Eq.~\eqref{eq:projectionI} can be rewritten as 
\( \lrs =\|P_{\tilde C}(\tilde Z)\|^2-\|P_{\tilde C_0}(\tilde Z)\|^2 \), where $\|\cdot\|$ denotes the standard Euclidean norm. The random vector $\tilde Z$ now has independent standard normal components, while all information about correlations and relative scales has been absorbed into the geometry of the transformed cones. 

\medskip 

In the case of $K$ parameters of interest (hereafter PoIs), the null distribution of $\lrs$ is a finite mixture of chi-square distributions with different degrees of freedom, 
\begin{equation} 
\lrs\ \sim\ \sum_{j=0}^{K}w_j\,\chi^2_j, \qquad w_j\ge0,\quad \sum_{j=0}^{K}w_j=1, \nonumber 
\end{equation}

\noindent i.e., the $\bar{\chi}^2$ distribution \citep{Kudo1963}. The mixture weights $w_j=w_j(\tilde C, \tilde C_0)$ depend only on the geometry of the alternative and null cones; for the case of $K$ PoIs on the boundary, with $\tilde C_0 = \{0\}$, they depend on the alternative cone only and can be expressed in terms of Gaussian conic angles as follows \citep{Shapiro1985, Shapiro1988, kuriki_takemura}. For each face $F$ of $\tilde C$ of dimension~$j$, let $\alpha(F)$ and $\beta(F)$ denote its internal and external Gaussian angles; then, we can express the weights as 
\begin{equation} 
w_j(\tilde C) =\sum_{\substack{F\subseteq\tilde C\\\dim(F)=j}}\alpha(F)\,\beta(F), \label{eq:angles}
\end{equation} 

\noindent where the summation runs, for each $j = 0, \dots, K$ over the faces of $\tilde C$ with dimension $j$. As mentioned before, these quantities coincide with the conic intrinsic volumes of~$\tilde C$ \citep{amelunxen2014,AmelunxenLotz2017}. In the special case where $\tilde C=\mathbb{R}_+^K$, the parameters are orthogonal, the angles in Eq.~\eqref{eq:angles} factorize and the weights reduce to the binomial form \citep{Shapiro1985, Shapiro1988} 

\begin{equation} 
w_j(\mathbb{R}_+^K)=2^{-K}\binom{K}{j}, 
\label{eq:orthant} 
\end{equation} 

\noindent reflecting that each coordinate of $\tilde Z$ independently falls in the positive half-line with probability $1/2$. When the parameters are not orthogonal, the cone $\tilde C$ is an oblique linear image of the orthant and the angles in Eq.~\eqref{eq:angles} no longer factorize. Writing the linear mapping as in Eq.\eqref{eq:whitening} and \eqref{eq:transformed_cones}, the alternative cone is $\tilde C = D^{1/2}P^{\top}\mathbb{R}_+^K = \{ A u : u \in \mathbb{R}_+^K\}$ with generator matrix $A = [a_1,\ldots,a_K] = D^{1/2}P^{\top}$. Let $G$ be the (Euclidean) Gram matrix of the generators, $G=(a_i^{\top}a_j)_{1\le i,j\le K}=A^{\top}A$, and let $H=G^{-1}$. For any subset of indices $\mathcal{S}\subseteq\{1,\ldots,K\}$, we denote by \( F_{\mathcal{S}} = \{ \sum_{i\in\mathcal{S}} \lambda_i a_i : \lambda_i \ge 0 \} \) the face of $\tilde C$ generated by the rays $\{a_i : i\in\mathcal{S}\}$, which has dimension $|\mathcal{S}|=j$. The internal and external Gaussian angles admit the following probabilistic representations in terms of orthant probabilities of centred normal distributions with covariance matrices built from principal submatrices of $H$ and of the polar Gram matrix $\widehat G$, see e.g. \citep{Shapiro1985,Shapiro1988,kuriki_takemura}, 
\[ \alpha(F_{\mathcal{S}}) \;=\; \Pr\!\big\{\,N_{|\mathcal{S}|}(0,\;H_{\mathcal{S}\mathcal{S}}) \in \mathbb{R}_+^{|\mathcal{S}|}\,\big\}, \qquad \beta(F_{\mathcal{S}}) \;=\; \Pr\!\big\{\,N_{K-|\mathcal{S}|}(0,\;\widehat H_{\mathcal{T}\mathcal{T}}) \in \mathbb{R}_+^{K-|\mathcal{S}|}\,\big\}, \] 
where $\mathcal{T}=\mathcal{S}^{\complement}$ denotes the complement of $\mathcal{S}$ in $\{1,\ldots,K\}$, $\widehat G = D_{\!H}^{-1/2}H\,D_{\!H}^{-1/2}$ is the polar Gram matrix, $\widehat H=\widehat G^{-1}$, and $D_{\!H}=\mathrm{diag}(H_{11},\ldots,H_{KK})$. Consequently, the weights can be computed as \begin{equation} w_j(\tilde C) \;=\; \sum_{\substack{\mathcal{S}\subseteq\{1,\ldots,K\}\\ |\mathcal{S}|=j}} \Pr\!\big\{N_{|\mathcal{S}|}(0,H_{\mathcal{S}\mathcal{S}})\in\mathbb{R}_+^{|\mathcal{S}|}\big\} \, \Pr\!\big\{N_{K-|\mathcal{S}|}(0,\widehat H_{\mathcal{T}\mathcal{T}})\in\mathbb{R}_+^{K-|\mathcal{S}|}\big\}. \label{eq:weights-gram} \end{equation} Eq.~\eqref{eq:weights-gram} reduces to the binomial expression of Eq.~\eqref{eq:orthant} when $A = \mathbb{I}_K$ (equivalently, when the cone $\tilde C$ is the positive orthant), so that $G=H=\widehat H=\mathbb{I}_K$ and each orthant probability equals $2^{-|\mathcal{S}|}$ or $2^{-|\mathcal{T}|}$. In practice, when the parameters are not orthogonal, one must compute the intrinsic volumes of the corresponding cone to obtain the mixture’s weights. Early studies such as \citet{Shapiro1985,Shapiro1988} and \citet{Robertson1988} discussed their analytical characterization for low-dimensional cases, while \citet{Sun1988} proposed recursive numerical integration for small systems and \citet{Wolak1987} provided exact expressions for a few low-dimensional cones arising in econometric inequality tests. Later, \citet{Silvapulle1996} introduced a Monte Carlo projection algorithm applicable to general polyhedral cones, and \citet{LinLindsay1997} related the weights to Weyl's tube formula, thereby linking them directly to the cone's geometric structure. More recent computational advances include the recursive integration method of \citet{Miwa2003} for moderate dimensions, the quasi-Monte Carlo scheme of \citet{GenzBretz2009} for higher-dimensional Gaussian cones, and the simulation-based framework of \citet{AmelunxenLotz2017}, which approximate the intrinsic volumes through repeated Gaussian projections. In the next sections, we investigate how these weights change in the case in which one or more coordinates are no longer treated as PoIs, but are instead ``demoted'' to nuisance parameters, so that the null hypothesis does not fix their value at zero anymore.

\section{Demoting one parameter to nuisance}
\label{sec:demotion-uncorrelated}

\subsection{Independent parameters}

Consider now the situation in which $K$ orthogonal parameters are constrained to be nonnegative, but only the first $K-1$ are of inferential interest while the last one plays the role of a nuisance parameter.
In this case, the null hypothesis cone does not correspond anymore to a single point (the origin), but to the one-dimensional ray $\tilde C_0 = \{(0,\ldots,0,t): t\ge 0\}$ lying on the boundary of the alternative cone $\tilde C = \mathbb{R}_+^K$, i.e. $\tilde C_0 = \mathbb{R}_+\, e_K$, where $e_K = (0,\ldots,0,1)^\top \in \mathbb{R}^K$ denotes the $K$-th canonical basis vector. This modification changes the geometry of the null cone, thereby altering both its intrinsic volumes and the associated projections, and hence the mixture weights. The change in weights induced by the demotion of one parameter in the orthogonal case is captured by the following lemma.

\begin{lemma}
\label{lem:delta-weights}
Let $\dot w_j^\perp=2^{-K}\binom{K}{j}$ for $j=0, \ldots,K$ denote the $\bar{\chi}^2$ weights for $K$ independent PoIs under the point null cone $\tilde C_0 = \{0\}$, and let $\bar w_j^\perp$ for $j=0, \ldots,K$ denote the corresponding weights when the $K$th parameter is treated as nuisance, i.e.\ under the ray null cone
$\tilde C_0=\mathbb{R}_+e_K$. Define $\Delta_j^\perp = \bar w_j^\perp - \dot w_j^\perp$ for $j=0,\ldots,K$.
Then
\begin{align}
&\Delta^\perp_0 = \frac{1}{2^K}, \nonumber \\[4pt]
&\Delta^\perp_j = 2^{-K}
   \Bigl[\binom{K-1}{j} - \binom{K-1}{j-1}\Bigr],
   \qquad 1 \le j \le K-1, \label{eq:deltaj}\\[4pt]
&\Delta^\perp_K =  -\frac{1}{2^K}. \nonumber
\end{align}
Equivalently, $\bar w^\perp_j = \dot w^\perp_j + \Delta^\perp_j$ for $j=0, \ldots,K$ defines a valid sequence of $\bar{\chi}^2$ weights satisfying
\[
\sum_{j=0}^{K}\Delta^\perp_j = 0,
\qquad
\sum_{j=0}^{K}\bar w_j^\perp = 1.
\]
\end{lemma}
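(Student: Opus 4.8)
The plan is to work directly with the projection representation of $\lrs$ rather than with the angle or intrinsic-volume formulas, since in the orthogonal case the projections decouple coordinatewise. In the whitened coordinates of Eq.~\eqref{eq:whitening}, independence of the parameters gives $\tilde C = \mathbb{R}_+^K$ and $\tilde Z \sim N_K(0,\mathbb{I}_K)$ with independent components. The projection onto the orthant is coordinatewise, $P_{\tilde C}(\tilde Z)_i = \tilde Z_i^+ := \max(\tilde Z_i,0)$, so $\|P_{\tilde C}(\tilde Z)\|^2 = \sum_{i=1}^K (\tilde Z_i^+)^2$. For the ray null cone $\tilde C_0 = \mathbb{R}_+ e_K$, I would minimize $\|\tilde Z - t e_K\|^2 = \sum_{i<K}\tilde Z_i^2 + (\tilde Z_K - t)^2$ over $t \ge 0$; the minimizer is $t^\star = \tilde Z_K^+$, whence $\|P_{\tilde C_0}(\tilde Z)\|^2 = (\tilde Z_K^+)^2$.

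Substituting both into the Euclidean form of \eqref{eq:projectionI} produces the key cancellation
\[
\lrs = \sum_{i=1}^K (\tilde Z_i^+)^2 - (\tilde Z_K^+)^2 = \sum_{i=1}^{K-1} (\tilde Z_i^+)^2,
\]
so that demoting the $K$th orthogonal coordinate to nuisance leaves exactly the statistic governing $K-1$ independent PoIs on the boundary. By \eqref{eq:orthant} this identifies the nuisance-case weights as $\bar w_j^\perp = 2^{-(K-1)}\binom{K-1}{j}$ for $0 \le j \le K-1$, together with $\bar w_K^\perp = 0$, since a sum of $K-1$ half-normal squares cannot produce a chi-square component of degree $K$.

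It then remains to compute $\Delta_j^\perp = \bar w_j^\perp - \dot w_j^\perp$. The endpoint cases are immediate: $\Delta_0^\perp = 2^{-(K-1)} - 2^{-K} = 2^{-K}$ and $\Delta_K^\perp = 0 - 2^{-K} = -2^{-K}$. For interior indices $1 \le j \le K-1$, I would expand $\binom{K}{j}$ via Pascal's rule $\binom{K}{j} = \binom{K-1}{j} + \binom{K-1}{j-1}$ and collect terms, obtaining $\Delta_j^\perp = 2^{-K}\bigl[\binom{K-1}{j} - \binom{K-1}{j-1}\bigr]$, which matches \eqref{eq:deltaj}. The normalization $\sum_j \bar w_j^\perp = 1$ holds because the $\bar w_j^\perp$ are the weights of the genuine $(K-1)$-parameter orthant $\bar\chi^2$ distribution, and $\sum_j \Delta_j^\perp = 0$ follows either by subtracting the two normalized sequences or directly by telescoping the antisymmetric differences $\binom{K-1}{j} - \binom{K-1}{j-1}$.

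I do not expect a serious obstacle: the whole argument rests on the observation that orthogonality makes the projections decouple, so the nuisance coordinate's contribution to $\|P_{\tilde C}(\tilde Z)\|^2$ is cancelled exactly by $\|P_{\tilde C_0}(\tilde Z)\|^2$. The only point requiring mild care is confirming that $\tilde C_0 = \mathbb{R}_+ e_K$ is a face of $\tilde C$ and that the projection onto this ray reduces to $\tilde Z_K^+$; both are immediate in the orthogonal geometry but fail this clean cancellation once correlations are present, which is precisely why the subsequent sections must treat the non-orthogonal case by separate means.
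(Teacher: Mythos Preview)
Your proposal is correct and follows essentially the same route as the paper's own proof: both work in whitened coordinates, compute the two projections explicitly to obtain the cancellation $\lrs=\sum_{i=1}^{K-1}(\tilde Z_i^+)^2$, read off $\bar w_j^\perp=2^{-(K-1)}\binom{K-1}{j}$, and then derive the $\Delta_j^\perp$ via Pascal's identity with the endpoint cases handled directly.
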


\noindent The proof of this lemma is presented in Appendix 1.

\medskip

\noindent Lemma \ref{lem:delta-weights} shows that ``demoting'' one parameter in the orthogonal case reallocates probability mass symmetrically across adjacent components: for $K=2$, $1/4$ of the total mass is shifted from the highest-degree component ($j=2$) to the null component ($j=0$);
for $K=3$, $1/8$ is transferred simultaneously from $j=3$ to $j=1$ and
from $j=2$ to $j=0$; and analogous redistribution patterns follow for larger~$K$.

\subsection{Generalization to correlated parameters}
\label{sec:correlated}

In the orthogonal setting, replacing the point-null cone with a ray-null shifts the $\bar\chi^2$ weights by the explicit values in Eq.~\eqref{eq:deltaj}, denoted $\Delta_j^\perp$ henceforth, arising purely from this geometric change.

We now proceed to consider the general case with correlated parameters, where the linear transformation to isotropic Gaussian coordinates does alter the metric geometry of the cone, but the combinatorial modification induced by replacing the point-null with a ray-null remains the same. Because the intrinsic-volume weights vary smoothly with the Gram matrix of the transformed generators, the resulting difference between the point-null and ray-null weights remains close to its orthogonal counterpart whenever the covariance distortion is mild.
The following result shows that this ``orthogonal'' redistribution of mass is in fact stable across all covariance matrices in a compact spectral class, with deviations governed by the anisotropy of the transformation and by the number of PoIs on the boundary.

\begin{theorem}[Perturbation bounds for $\bar\chi^2$ weights under positive covariance distortion]
\label{thm:approx-delta-invariance}

Let $\tilde C\subset\mathbb{R}^K$ be the image of the alternative polyhedral cone
under the linear transformation that maps 
$Z\sim N_K(0,\Sigma)$ to an isotropic Gaussian 
$\tilde Z\sim N_K(0,\mathbb{I}_K)$, with $\Sigma = I^{-1}(\theta_0)$ a covariance matrix with positive entries.
Let $\tilde C_0^{\mathrm{pt}}=\{0\}$ denote the point-null cone corresponding to $K$ constrained PoIs, and let 
$\tilde C_0^{\mathrm{ray}}$ denote the ray-null cone obtained by demoting one of the parameters to a nuisance parameter, corresponding to the generator of $\tilde C$ associated with that parameter.

For $j=0,\ldots,K$, define
\[
\dot w_j:=w_j(\tilde C,\tilde C_0^{\mathrm{pt}};\Sigma),
\qquad
\bar w_j:=w_j(\tilde C,\tilde C_0^{\mathrm{ray}};\Sigma),
\]
and write
\[
\dot w_j^\perp:=w_j(\tilde C,\tilde C_0^{\mathrm{pt}};\mathbb{I}_K),
\qquad
\bar w_j^\perp:=w_j(\tilde C,\tilde C_0^{\mathrm{ray}};\mathbb{I}_K).
\]

\noindent In the orthogonal case ($\Sigma=\mathbb{I}_K$), the explicit differences
\(
\Delta_j^\perp
:=\bar w_j^\perp - \dot w_j^\perp
\)
are given in Lemma~\ref{lem:delta-weights}. 

\medskip

Let \(\mathcal{U}_\kappa\) denote the spectrally bounded covariance class
\[
\mathcal{U}_\kappa
:=
\bigl\{
\Sigma \succ 0:\ 
\kappa^{-1}\,\mathbb{I}_K \ \preceq\ \Sigma\ \preceq\ \kappa\,\mathbb{I}_K
\bigr\},
\]
where \(\kappa \ge 1\) is a spectral bound parameter controlling the amount of anisotropy.

For \(\Sigma \in \mathcal{U}_\kappa\), let \(G(\Sigma)\) denote the Gram matrix of the cone generators after applying the linear mapping that defines $\tilde C$, and 
define the anisotropy index
\[
\delta(\Sigma)
:=
\bigl\|\, G(\Sigma) - G(\mathbb{I}_K) \,\bigr\|_{\mathrm{op}},
\]

\noindent where \(\|\cdot\|_{\mathrm{op}}\) denotes the spectral (operator) norm.

\medskip

Then for every $\Sigma\in\mathcal{U}_\kappa$ and every $j=0,\ldots,K$,
\begin{equation}
\label{eq:delta-expansion-final}
\bar w_j
=
\dot w_j
+
\Delta_j^\perp
+
\varepsilon_j(\Sigma,K),
\end{equation}
where
\[
\varepsilon_j(\Sigma,K)
:=
\bigl(
\bar w_j-\bar w_j^\perp
\bigr)
-
\bigl(
\dot w_j - \dot w_j^\perp
\bigr),
\]
and the error terms satisfy the uniform bound
\[
\max_{0\le j\le K}|\varepsilon_j(\Sigma,K)|
\ \le\
c_K(\tilde C,\kappa)\,\delta(\Sigma),
\]
for a constant $c_K(\tilde C,\kappa)>0$ depending only on the spectral radius bound $\kappa$ and on the geometry of $\tilde C$, which in turns depends on $K$.
\end{theorem}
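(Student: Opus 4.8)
The plan is to reduce the claimed expansion to a single Lipschitz estimate. Writing $D_j(\Sigma):=\bar w_j-\dot w_j$ for the weight shift produced by the demotion at covariance $\Sigma$, the defining identity for $\varepsilon_j$ rearranges into $\varepsilon_j(\Sigma,K)=D_j(\Sigma)-D_j(\mathbb{I}_K)$, since $D_j(\mathbb{I}_K)=\bar w_j^\perp-\dot w_j^\perp=\Delta_j^\perp$. Thus the entire content of the theorem is the assertion that $D_j$ is Lipschitz in the Gram matrix, with increment measured by $\delta(\Sigma)=\|G(\Sigma)-G(\mathbb{I}_K)\|_{\mathrm{op}}$. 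The first step is therefore to record that both weight families depend on $\Sigma$ only through $G$: conic intrinsic volumes, and more generally the internal and external Gaussian angles of a cone configuration, are invariant under ambient rotations, and for the simplicial alternative $\tilde C=A\mathbb{R}_+^K$ the configuration (including the demoted extreme ray $\tilde C_0^{\mathrm{ray}}$) is determined up to rotation by $G=A^\top A$. Hence $\dot w_j$, $\bar w_j$, and $D_j$ are genuine functions of $G$, and it suffices to bound $|D_j(G)-D_j(\mathbb{I}_K)|$ on the compact set $\mathcal{G}_\kappa:=\{G(\Sigma):\Sigma\in\mathcal{U}_\kappa\}$ in operator norm.

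Second, I would establish that $G\mapsto D_j(G)$ is continuously differentiable on a neighbourhood of $\mathcal{G}_\kappa$. The point-null weights are given explicitly by the facewise orthant-probability formula \eqref{eq:weights-gram}, and the ray-null weights admit the analogous facewise representation in which the demoted generator enters the internal and external angles of the faces that contain it; because $\tilde C$ is simplicial, its face lattice is combinatorially fixed as $G$ varies, so only the angles move. Each weight is then a finite sum of products of Gaussian orthant probabilities $M\mapsto\Pr\{N_m(0,M)\in\mathbb{R}_+^m\}$ evaluated at principal submatrices of $H=G^{-1}$ and of the polar inverse $\widehat H=\widehat G^{-1}$, where $\widehat G=D_H^{-1/2}HD_H^{-1/2}$. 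Every map in this chain -- inversion, extraction of principal submatrices, diagonal rescaling, and the orthant probability itself -- is real-analytic on the open cone of positive-definite matrices, so the composition is $C^1$ wherever the relevant submatrices stay positive definite.

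Third, I would upgrade this to a uniform bound. On $\mathcal{U}_\kappa$ the spectrum of $\Sigma$, and hence of $G$, lies in the compact interval $[\kappa^{-1},\kappa]\subset(0,\infty)$ (indeed, with the whitening of the Background section, $G(\Sigma)=\Sigma^{-1}$, so the eigenvalue bounds transfer verbatim); by Cauchy interlacing, and by the analogous control on $\widehat H$ inherited from the well-conditioned $H$ and its bounded diagonal $D_H$, every principal submatrix of $H$ and of $\widehat H$ stays uniformly positive definite and uniformly bounded. Consequently the orthant probabilities never approach the boundary of their smoothness domain, and Plackett's differentiation formula -- which expresses the derivative of an orthant probability in a correlation entry through a lower-dimensional orthant probability -- yields a bound on each partial derivative that is uniform over $\mathcal{G}_\kappa$. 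Collecting the at most $\binom{K}{j}$ faces of dimension $j$ in each facewise sum, together with the chain-rule factors from $G\mapsto(H,\widehat H)$, produces a finite constant $c_K(\tilde C,\kappa):=\max_j\sup_{\mathcal{G}_\kappa}\|\nabla D_j\|_\ast$ depending only on $K$, the geometry of $\tilde C$, and $\kappa$, whose growth in $K$ is governed by this face count -- the ``cone dimensionality'' factor in the statement. The mean value inequality then gives $|\varepsilon_j|=|D_j(G(\Sigma))-D_j(\mathbb{I}_K)|\le c_K(\tilde C,\kappa)\,\|G(\Sigma)-\mathbb{I}_K\|_{\mathrm{op}}=c_K(\tilde C,\kappa)\,\delta(\Sigma)$, uniformly in $j$.

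The main obstacle I anticipate is exactly this uniform control of the orthant-probability derivatives: I must rule out that any principal submatrix of $H$ or of the polar inverse $\widehat H$ becomes ill-conditioned -- i.e.\ that an internal or external angle collapses -- as $\Sigma$ ranges over the class, since there the orthant probability, though still continuous, loses a uniform derivative bound. This is precisely where the spectral confinement $\mathcal{U}_\kappa$ and the positive-entries hypothesis are used: they keep the whole deformation within the well-conditioned, positively correlated regime on which the facewise formula is smooth with bounded gradient, so that the orthogonal pattern $\Delta_j^\perp$ remains the leading term and the remainder is genuinely first order in $\delta(\Sigma)$. A secondary point to verify carefully is the facewise representation of the ray-null weights and the constancy of the face lattice under the deformation, which legitimises differentiating the weight formula term by term.
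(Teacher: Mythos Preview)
Your proposal is correct and follows essentially the same strategy as the paper: express the weights as finite sums of Gaussian orthant probabilities via the facewise formula, establish that these probabilities are Lipschitz in their covariance arguments on the compact spectral class, propagate the Lipschitz bound through the smooth chain $G\mapsto H\mapsto\widehat H$ and its principal submatrices, and conclude by a mean-value argument.

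There are two minor differences worth noting. First, you work directly with the difference $D_j(\Sigma)=\bar w_j-\dot w_j$ and bound $|D_j(G(\Sigma))-D_j(\mathbb{I}_K)|$ in one stroke, whereas the paper bounds $|\dot w_j-\dot w_j^\perp|$ and $|\bar w_j-\bar w_j^\perp|$ separately and combines them by the triangle inequality; your packaging is slightly more economical and in principle could exploit cancellation between the two perturbations, though the paper does not pursue this. Second, for the Lipschitz estimate on $\Phi_d(\Gamma)=\Pr\{N_d(0,\Gamma)\in\mathbb{R}_+^d\}$ you invoke Plackett's differentiation formula, while the paper differentiates the Gaussian density directly, bounds the integrand via $\|\Gamma^{-1}\|\le M$ on $\mathcal{K}_{d,M}$, and integrates along the segment $\Gamma_t=(1-t)\Gamma_1+t\Gamma_2$ to obtain the explicit constant $c(d,M)=\tfrac12 dM(M^2+1)$. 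Both routes yield the same qualitative bound; the paper's is more self-contained and produces an explicit Lipschitz constant, while yours outsources the calculus to a standard identity. Your flagged concerns---the facewise representation of the ray-null weights and the constancy of the face lattice---are handled in the paper only implicitly (``the same argument applies verbatim to the covariance blocks $H^{\mathrm{ray}}$ and $\widehat H^{\mathrm{ray}}$''), so your explicit acknowledgment of them is, if anything, more careful.
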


\bigskip

\begin{proof}
By definition of $\varepsilon_j$, in order to prove the theorem it suffices to obtain uniform bounds for the perturbations 
$|\dot w_j-\dot w_j^\perp|$ and $|\bar w_j-\bar w_j^\perp|$ over $\Sigma$. The strategy is to: (i) express each weight as a sum of Gaussian orthant probabilities; (ii) show that these probabilities are Lipschitz in the underlying covariance;  (iii) control how the relevant covariance blocks 
change with~$\Sigma$.

\medskip
From the intrinsic-volume formula recalled earlier, one has
\begin{equation}
\label{eq:rep-pt}
\dot w_j
=
\sum_{\substack{\mathcal S\subseteq\{1,\ldots,K\}\\ |\mathcal S|=j}}
P_{\mathcal S}(\Sigma)\,\widehat P_{\mathcal T}(\Sigma),
\qquad \mathcal T=\mathcal S^{\complement},
\end{equation}
where $P_{\mathcal S}(\Sigma)$ and $\widehat P_{\mathcal T}(\Sigma)$ are Gaussian 
orthant probabilities involving principal blocks of $H(\Sigma)$ and its polar.
Thus, controlling $\dot w_j$ amounts to controlling how these 
orthant probabilities vary with~$\Sigma$.

\medskip
For a covariance matrix $\Gamma$ in arbitrary dimension $d$, let
\[
\Phi_d(\Gamma)
:=\Pr\{N_d(0,\Gamma)\in\mathbb{R}_+^d\}
=\int_{\mathbb{R}_+^d}\varphi_\Gamma(x)\,dx,
\]
denote the Gaussian orthant probability associated with covariance $\Gamma$. To understand how $\Phi_d$ changes under perturbations of $\Gamma$, consider the symmetric perturbation $\Gamma_t=\Gamma+t\Delta\Gamma$.  
Standard matrix calculus identities yield
\[
\partial_{\Delta\Gamma}\varphi_\Gamma(x)
=\frac12\,\varphi_\Gamma(x)\Bigl(
x^\top\Gamma^{-1}(\Delta\Gamma)\Gamma^{-1}x
-\operatorname{tr}(\Gamma^{-1}\Delta\Gamma)\Bigr).
\]

For $\Gamma_1,\Gamma_2 \in \mathcal K_{d,M}$, the segment 
\(
\Gamma_t = (1-t)\Gamma_1 + t\Gamma_2 ,
\mbox{ with } t\in[0,1],
\)
remains in the compact spectral class $\mathcal K_{d,M}$, which is convex.  
Hence the derivative bound we derive below will hold uniformly in  $t\in[0,1]$.  
Moreover, since $\Gamma_t$ stays in $\mathcal K_{d,M}$, there exist $c,C>0$ such that 
\[
\bigl|\partial_{\Delta\Gamma}\varphi_{\Gamma_t}(x)\bigr|
\le C e^{-c\|x\|^2}
\quad\text{for all }t\in[0,1],\ x\in\mathbb R^d,
\]
and the right-hand side is integrable on $\mathbb{R}_+^d$.  
Thus the derivative may be passed under the integral sign by dominated convergence,
\[
D\Phi_d(\Gamma)[\Delta\Gamma]
=
\int_{\mathbb{R}_+^d}\partial_{\Delta\Gamma}\varphi_\Gamma(x)\,dx.
\]

Assume 
\[
\Gamma\in\mathcal{K}_{d,M}
:=\{\Gamma\succ0:\ M^{-1}\mathbb{I}_d\preceq\Gamma\preceq M\mathbb{I}_d\}.
\]
Then $\|\Gamma^{-1}\|\le M$, and therefore
\begin{equation}
|x^\top\Gamma^{-1}(\Delta\Gamma)\Gamma^{-1}x|
\le M^2\|\Delta\Gamma\|\,\|x\|^2,\qquad
|\operatorname{tr}(\Gamma^{-1}\Delta\Gamma)|
\le dM\,\|\Delta\Gamma\|.
\label{bounds}
\end{equation}

Using \eqref{bounds} in the derivative formula yields
\[
|\partial_{\Delta\Gamma}\varphi_\Gamma(x)|
\le \tfrac12\,\varphi_\Gamma(x)
\bigl(M^2\|\Delta\Gamma\|\,\|x\|^2 + dM\|\Delta\Gamma\|\bigr).
\]

Integrating and bounding the truncated moments of $N_d(0,\Gamma)$ gives
\begin{equation}
\label{eq:DPhid-bound-proof}
|D\Phi_d(\Gamma)[\Delta\Gamma]|
\le c(d,M)\,\|\Delta\Gamma\|,\qquad
c(d,M)=\tfrac12 dM(M^2+1).
\end{equation}

Therefore, Gaussian orthant probabilities are uniformly Lipschitz in the  covariance matrix on compact spectral classes.

\medskip

The derivative bound \eqref{eq:DPhid-bound-proof} will now be used to obtain a global Lipschitz bound for $\Phi_d$ on the spectral class $\mathcal K_{d,M}$. In particular, we show that $\Phi_d$ is Lipschitz in $\Gamma$, meaning that small perturbations of the covariance produce proportionally small changes in the orthant probability.

For $\Gamma_1,\Gamma_2\in\mathcal{K}_{d,M}$ define
$\Gamma_t=(1-t)\Gamma_1+t\Gamma_2$. Since the spectral class is convex and \eqref{eq:DPhid-bound-proof} is uniform,
\[
\Phi_d(\Gamma_2)-\Phi_d(\Gamma_1)
=\int_0^1 D\Phi_d(\Gamma_t)[\Gamma_2-\Gamma_1]\,dt.
\]
Hence
\begin{equation}
\label{eq:PhiLip-proof}
|\Phi_d(\Gamma_2)-\Phi_d(\Gamma_1)|
\le c(d,M)\,\|\Gamma_2-\Gamma_1\|.
\end{equation}
that is, $\Phi_d$ is Lipschitz in $\Gamma$ on the spectral class 
$\mathcal{K}_{d,M}$.

\medskip

The map $\Sigma\mapsto G(\Sigma)$ is smooth on the compact set 
$\mathcal U_\kappa$, and inversion is smooth on the positive definite cone, hence
\begin{equation}
\label{eq:inv-perturb-proof}
\|H(\Sigma)-H(\mathbb{I}_K)\|
\le c_1(\tilde C,\kappa)\,\delta(\Sigma).
\end{equation}
Principal submatrices satisfy the same bound:
\begin{equation}
\label{eq:block-perturb-proof}
\|H_{\mathcal S\mathcal S}(\Sigma)-H_{\mathcal S\mathcal S}(\mathbb{I}_K)\|
\le c_1(\tilde C,\kappa)\,\delta(\Sigma),
\end{equation}
and similarly for $\widehat H$, $H^{\mathrm{ray}}$, and $\widehat H^{\mathrm{ray}}$.

\medskip

From \eqref{eq:rep-pt},
\[
|\dot w_j-\dot w_j^\perp|
\le
\sum_{|\mathcal S|=j}
\Bigl(
|P_{\mathcal S}(\Sigma)-P_{\mathcal S}^\perp|
+
|\widehat P_{\mathcal T}(\Sigma)-\widehat P_{\mathcal T}^\perp|
\Bigr),
\]
where $\mathcal T=\mathcal S^{\complement}$.
Applying the Lipschitz bound \eqref{eq:PhiLip-proof} together with the block 
perturbation bound \eqref{eq:block-perturb-proof}, we obtain for each fixed
subset $\mathcal S$,
\[
|P_{\mathcal S}(\Sigma)-P_{\mathcal S}^\perp|
\le c(j,\kappa)\,\delta(\Sigma),
\qquad
|\widehat P_{\mathcal T}(\Sigma)-\widehat P_{\mathcal T}^\perp|
\le c(K-j,\kappa)\,\delta(\Sigma),
\]
for suitable constants $c(j,\kappa)$ depending only on the dimension of the
block and on~$\kappa$.  Since, for fixed $K$, there are only finitely many
subsets $\mathcal S\subseteq\{1,\ldots,K\}$ of size $j$, we may absorb the
resulting finite sum into a single constant depending on $K$ and on the
geometry of $\tilde C$, and conclude that
\begin{equation}
\label{eq:dot-final-proof}
|\dot w_j-\dot w_j^\perp|
\le c_K(\tilde C,\kappa)\,\delta(\Sigma),
\end{equation}
for some $c_K(\tilde C,\kappa)>0$.

\medskip

The same argument applies verbatim to the covariance blocks 
$H^{\mathrm{ray}}(\Sigma)$ and $\widehat H^{\mathrm{ray}}(\Sigma)$,
yielding
\begin{equation}
\label{eq:bar-final-proof}
|\bar w_j-\bar w_j^\perp|
\le c_K(\tilde C,\kappa)\,\delta(\Sigma).
\end{equation}

Combining \eqref{eq:dot-final-proof} and \eqref{eq:bar-final-proof} and using the 
triangle inequality explicitly,
\[
|\varepsilon_j(\Sigma,K)|
=
\bigl|
(\bar w_j(\Sigma)-\bar w_j^\perp)
-
(\dot w_j(\Sigma)-\dot w_j^\perp)
\bigr|
\le
|\bar w_j(\Sigma)-\bar w_j^\perp|
+
|\dot w_j(\Sigma)-\dot w_j^\perp|
\le
2\,c_K(\tilde C,\kappa)\,\delta(\Sigma),
\]
and the factor of $2$ can be absorbed into the constant $c_K(\tilde C,\kappa)$.
Substituting into \eqref{eq:delta-expansion-final} completes the proof.

\end{proof}

\begin{remark}
\label{rem:approx-interpretation}

Theorem~\ref{thm:approx-delta-invariance} shows that the difference between the
$\bar{\chi}^2$ weights associated with the point-null and ray-null cones,
\[
\Delta_j(\Sigma)
=\bar w_j(\Sigma)-\dot w_j(\Sigma),
\qquad j=0,\ldots,K,
\]
deviates from its orthogonal counterpart $\Delta_j^{\perp}$ by an amount
bounded by a term that depends on $K$ only through the geometry of $\tilde C$
(e.g.\ its face lattice), and on $\Sigma$ solely through the anisotropy index
$\delta(\Sigma)$.
This has several implications.

\smallskip

\emph{(i) Orthogonal $\Delta$ is the leading-order term.}
In the orthogonal case $\Sigma=\mathbb{I}_K$, the vector $\Delta^\perp$ has the exact closed-form expression provided in Lemma \ref{lem:delta-weights}. The theorem shows that $\Delta^\perp$ remains the dominant
contribution for general $\Sigma$, and correlations introduce only a controlled perturbation of size at most $c_K(\tilde C,\kappa)\,\delta(\Sigma)$.

\smallskip

\emph{(ii) Growth in $K$.}
The theorem does not impose a specific growth rate in $K$, but it isolates the $K$-dependence entirely inside $c_K(\tilde C,\kappa)$ and shows that for a fixed geometry the error varies smoothly with $\delta(\Sigma)$.

\smallskip

\emph{(iii) Dependence on the anisotropy of $\Sigma$.}
The quantity $\delta(\Sigma)=\|G(\Sigma)-G(\mathbb{I}_K)\|_{\mathrm{op}}$
measures the deviation of the transformed cone from the orthogonal case. Therefore, the theorem implicitely states that the effect on the $\bar\chi^2$ weights of demoting a generator to a nuisance direction depends primarily on the local geometry of the cone as encoded by $G(\Sigma)$, and varies in a Lipschitz fashion with the geometric deviation from the orthogonal case. The approximation is excellent when $\Sigma$ is nearly diagonal and degrades smoothly as the covariance becomes more anisotropic.

\smallskip

\emph{(iv) Practical implication.}
The theorem supports the use of the ``orthogonal-difference'' approximation in applications: starting from the point-null weights and adding the orthogonal $\Delta^\perp$ yields an accurate approximation of the true ray-null weights whenever the covariance distortion is not extreme.

\end{remark}

\begin{remark}
\label{rem:positive-corr-necessity-concise}

Strictly speaking, the proof of Theorem~\ref{thm:approx-delta-invariance} does not require the entries of~$\Sigma$ to be positive. However, positive correlation is needed for the asymptotic distribution of $\lrs$ to admit a genuine $\bar\chi^2$-mixture interpretation only under non-negative correlation \citep{us}, as negative correlation alters the cone geometry in such a way that a valid $\bar\chi^2$ representation may fail. Thus, while the perturbation expansion~\eqref{eq:delta-expansion-final} holds for all $\Sigma\in\mathcal U_\kappa$, its probabilistic interpretation as a perturbation of $\bar\chi^2$ weights implicitly relies on the positive-correlation regime. For $\rho<0$, the extension proposed in the two-parameter case in \citet{us} can in principle be extended to $K$ parameters, but such an extension lies beyond the scope of this work.

\end{remark}

\begin{corollary}[Equicorrelation]
\label{cor:equicorr-bound}
Let 
\[
\Sigma_\rho = (1-\rho)\,\mathbb{I}_K + \rho\,\mathbf{1}\mathbf{1}^\top,
\qquad -\frac{1}{K-1} < \rho < 1,
\]
be an equicorrelation covariance matrix, with condition number
\[
\kappa(\Sigma_\rho)
=
\frac{1 + (K-1)\rho}{1-\rho}
\quad\text{so that}\quad
\kappa(\Sigma_\rho)-1 = \frac{K\rho}{1-\rho}.
\]
Then its anisotropy index satisfies
\[
\delta(\Sigma_\rho)
=\|G(\Sigma_\rho)-G(\mathbb{I}_K)\|_{\mathrm{op}}
\;\le\;
\frac{K\rho}{1-\rho}.
\]
Consequently, Theorem~\ref{thm:approx-delta-invariance} yields
\[
\max_{0\le j\le K} |\varepsilon_j(\Sigma_\rho,K)|
\;\le\;
c_K(\tilde C,\kappa(\Sigma_\rho))\,
\frac{K\rho}{1-\rho}.
\]
In particular, for fixed $\rho<1$, the deviation from the orthogonal
difference $\Delta^\perp$ grows at most on the order of $K\rho/(1-\rho)$,
up to the geometric constant $c_K(\tilde C,\kappa(\Sigma_\rho))$.
\end{corollary}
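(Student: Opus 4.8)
The plan is to reduce all three assertions to the spectral analysis of the equicorrelation matrix and then invoke Theorem~\ref{thm:approx-delta-invariance} without modification. The starting point is the observation that, in the notation of Section~\ref{sec:background}, the generator Gram matrix coincides with the Fisher information: since the whitening map sends $e_i$ to $a_i = D^{1/2}P^\top e_i$, one has $G(\Sigma)=A^\top A = PDP^\top = I(\theta_0)=\Sigma^{-1}$, and in particular $G(\mathbb{I}_K)=\mathbb{I}_K$. Hence the anisotropy index is $\delta(\Sigma)=\|\Sigma^{-1}-\mathbb{I}_K\|_{\mathrm{op}}$, and the whole corollary becomes a computation with $\Sigma_\rho^{-1}$.

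First I would diagonalize $\Sigma_\rho$: the vector $\mathbf{1}$ is an eigenvector with eigenvalue $1+(K-1)\rho$, and every vector orthogonal to $\mathbf{1}$ is an eigenvector with eigenvalue $1-\rho$ of multiplicity $K-1$. Positive definiteness over the stated range $-1/(K-1)<\rho<1$ is then immediate, and in the positive-correlation regime $\rho\ge 0$ (the only one in which the claimed bound $K\rho/(1-\rho)\ge 0$ is meaningful) the largest and smallest eigenvalues are $1+(K-1)\rho$ and $1-\rho$, giving $\kappa(\Sigma_\rho)=(1+(K-1)\rho)/(1-\rho)$; subtracting $1$ and simplifying over the common denominator yields $\kappa(\Sigma_\rho)-1=K\rho/(1-\rho)$, the first claim.

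Next I would compute $\Sigma_\rho^{-1}$ by Sherman--Morrison, obtaining $\Sigma_\rho^{-1}=(1-\rho)^{-1}\bigl[\mathbb{I}_K-\rho(1+(K-1)\rho)^{-1}\mathbf{1}\mathbf{1}^\top\bigr]$, so that $M:=\Sigma_\rho^{-1}-\mathbb{I}_K$ shares the eigenbasis of $\Sigma_\rho$. A direct evaluation gives the eigenvalue $\rho/(1-\rho)$ on the orthogonal complement of $\mathbf{1}$ (multiplicity $K-1$) and the eigenvalue $-(K-1)\rho/(1+(K-1)\rho)$ on $\mathbf{1}$, whence $\delta(\Sigma_\rho)=\|M\|_{\mathrm{op}}$ is the larger of the two absolute values. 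It then remains to bound each: the term $\rho/(1-\rho)$ is at most $K\rho/(1-\rho)$ trivially, while $1+(K-1)\rho\ge 1-\rho$ for $\rho\ge 0$ gives $(K-1)\rho/(1+(K-1)\rho)\le (K-1)\rho/(1-\rho)\le K\rho/(1-\rho)$; combining the two yields $\delta(\Sigma_\rho)\le K\rho/(1-\rho)$, the second claim.

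Finally, substituting this bound into the uniform estimate of Theorem~\ref{thm:approx-delta-invariance} produces $\max_{0\le j\le K}|\varepsilon_j(\Sigma_\rho,K)|\le c_K(\tilde C,\kappa(\Sigma_\rho))\,K\rho/(1-\rho)$, and the concluding order-of-magnitude statement is just a reading of this inequality for fixed $\rho<1$. The computation is elementary throughout; the only steps requiring any care are the identification $G(\Sigma)=\Sigma^{-1}$ (so that $\delta$ is measured on the inverse covariance, not on $\Sigma$ itself) and, relatedly, the comparison of the two eigenvalue branches of $M$: for small $\rho$ and large $K$ it is the eigenvalue on $\mathbf{1}$, rather than the $(K-1)$-fold eigenvalue, that can dominate, so both branches must be bounded explicitly and the coarser but uniform bound $K\rho/(1-\rho)$ retained.
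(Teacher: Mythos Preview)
Your proof is correct and follows essentially the same spectral route as the paper, but with one substantive difference in how $G(\Sigma)$ is read. You take $G(\Sigma)=A^\top A=\Sigma^{-1}$ literally from Section~\ref{sec:background} and therefore work with the two eigenvalue branches $\rho/(1-\rho)$ and $-(K-1)\rho/(1+(K-1)\rho)$ of $\Sigma_\rho^{-1}-\mathbb{I}_K$, bounding each separately. The paper's proof instead normalizes the generators to unit length, setting $G(\Sigma_\rho)=D^{-1/2}\Sigma_\rho^{-1}D^{-1/2}$ with $D=a\,\mathbb{I}_K$ the diagonal of $\Sigma_\rho^{-1}$; this yields $G(\Sigma_\rho)=\mathbb{I}_K-\alpha(\mathbf{1}\mathbf{1}^\top-\mathbb{I}_K)$ with $\alpha=\rho/(1+(K-2)\rho)$ and hence a single exact value $\delta(\Sigma_\rho)=\alpha(K-1)=(K-1)\rho/(1+(K-2)\rho)$. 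Both readings are consistent with the somewhat loose phrasing in Theorem~\ref{thm:approx-delta-invariance}, and both lead to the same upper bound $K\rho/(1-\rho)$, so your argument goes through. The paper's normalization has the geometric advantage that $\delta$ becomes scale-invariant (rescaling a generator does not change the cone), and it gives a sharper closed-form value of $\delta$; your version is slightly shorter and avoids computing the diagonal entries of $\Sigma_\rho^{-1}$, at the cost of having to track and bound two eigenvalue branches rather than one.
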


\begin{proof}
The eigenvalues of $\Sigma_\rho$ are 
$\lambda_1 = 1+(K-1)\rho$ and $\lambda_2=\dots=\lambda_K=1-\rho$,
giving the stated condition number.  
The inverse is
\[
\Sigma_\rho^{-1}
=
\frac{1}{1-\rho}\,I
-\frac{\rho}{(1-\rho)\{1+(K-1)\rho\}}\,
\mathbf{1}\mathbf{1}^\top,
\]
so all diagonal entries are equal to
\[
a=\frac{1+(K-2)\rho}{(1-\rho)\{1+(K-1)\rho\}},
\]
and all off-diagonal entries equal
\[
b=-\,\frac{\rho}{(1-\rho)\{1+(K-1)\rho\}}.
\]
Since $G(\Sigma_\rho)=D^{-1/2}\Sigma_\rho^{-1}D^{-1/2}$ with 
$D=a\,\mathbb{I}_K$, we obtain
\[
G(\Sigma_\rho)
=\mathbb{I}_K - \alpha\bigl(\mathbf{1}\mathbf{1}^\top-\mathbb{I}_K\bigr),
\qquad
\alpha=\frac{\rho}{1+(K-2)\rho}.
\]
Thus $G(\Sigma_\rho)-\mathbb{I}_K$ has eigenvalues $\alpha$ (multiplicity $K-1$)
and $-\alpha(K-1)$, so
\[
\delta(\Sigma_\rho)
=\|G(\Sigma_\rho)-\mathbb{I}_K\|_{\mathrm{op}}
=\alpha(K-1)
=\frac{(K-1)\rho}{1+(K-2)\rho}
\;\le\;
\frac{K\rho}{1-\rho}.
\]
Substituting this bound into Theorem~\ref{thm:approx-delta-invariance}
gives the claim.
\end{proof}

\medskip

\noindent Corollary~\ref{cor:equicorr-bound} shows that for equicorrelated covariance matrices $\Sigma_\rho$, 
the bound is explicitly $\delta(\Sigma_\rho)\le K\rho/(1-\rho)$, guaranteeing the perturbation to remain small for moderate $K$ and $\rho$.

\section{Generic number of nuisance parameters}

\subsection{Independent parameters}

\begin{lemma}
\label{lem:delta-weights-m}
Let $\dot w^\perp_j = 2^{-K}\binom{K}{j}$ for $j=0,\ldots,K$ denote the $\bar{\chi}^2$ weights for $K$ independent PoIs constrained to be nonnegative, with point-null cone $C^{\mathrm{pt}}_0=\{0\}$. For a fixed integer $m$ with $1 \le m \le K-1$, 
let $\bar w^{(m)\perp}_j$, $j=0,\ldots,K$, denote the $\bar{\chi}^2$ weights when the last $m$ coordinates are treated as nuisance parameters on the boundary.

In the case where the last $m$ parameters are constrained nuisance
parameters (i.e.\ $\theta_{K-m+1},\ldots,\theta_K \ge 0$ under $H_0$),
the corresponding null cone is the $m$-dimensional face
\[
\tilde C^{(m)}_0
= \bigl\{ (0,\ldots,0, t_{K-m+1},\ldots,t_K) : t_j \ge 0 \, \mbox{ for } \, j=K-m+1, \ldots, K \bigr\}
\]
or, equivalently,
\(
\tilde C^{(m)}_0
= \operatorname{cone}(e_{K-m+1},\ldots,e_K)
\cong \mathbb{R}_+^m.
\) \\
Then, 

\[
\Delta^{(m)\perp}_j = \bar w^{(m)\perp}_j - \dot w^{\perp}_j
  = \begin{cases}
      2^{-K}\!\left[\,2^m \binom{K-m}{j} - \binom{K}{j}\right], & 0 \le j \le K-m,
      \\[6pt]
      -\,2^{-K} \binom{K}{j}, & K-m+1 \le j \le K.
    \end{cases}
\]
and
$\bar w^{(m)\perp}_j = \dot w^{\perp}_j + \Delta^{(m)\perp}_j$ for $j=0,\ldots,K$, with
\[
\sum_{j=0}^{K} \Delta^{(m)\perp}_j = 0,
\qquad
\sum_{j=0}^{K} \bar w^{(m)\perp}_j = 1.
\]

Equivalently,
\[
\bar w^{(m)\perp}_j
  = \begin{cases}
      2^{-(K-m)} \binom{K-m}{j}, & 0 \le j \le K-m,\\[6pt]
      0, & K-m+1 \le j \le K,
    \end{cases}
\]

and for $m=1$ this reduces to Lemma~\ref{lem:delta-weights}.
\end{lemma}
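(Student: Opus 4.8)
The plan is to argue directly from the projection representation of $\lrs$ in isotropic coordinates, Eq.~\eqref{eq:projectionI}, rather than through the face-wise angle formula, because in the orthogonal regime the projections decouple across coordinates and the nuisance block cancels exactly. In the orthogonal case $\tilde C=\mathbb{R}_+^K$ and $\tilde Z\sim N_K(0,\mathbb{I}_K)$, so the alternative cone is a Cartesian product of half-lines and the metric projection acts coordinatewise, $P_{\tilde C}(\tilde Z)=\bigl((\tilde Z_1)^+,\ldots,(\tilde Z_K)^+\bigr)$, where $x^+=\max(x,0)$. The null cone $\tilde C_0^{(m)}=\{0\}^{K-m}\times\mathbb{R}_+^m$ is also coordinate-aligned, so its projection vanishes on the first $K-m$ coordinates and equals the positive part on the last $m$, giving $P_{\tilde C_0^{(m)}}(\tilde Z)=\bigl(0,\ldots,0,(\tilde Z_{K-m+1})^+,\ldots,(\tilde Z_K)^+\bigr)$.

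Substituting both projections into Eq.~\eqref{eq:projectionI}, the squared norms become sums of the $(\tilde Z_i^+)^2$, and the last $m$ coordinates appear in both terms with the same sign and cancel identically:
\[
\lrs=\sum_{i=1}^{K}(\tilde Z_i^+)^2-\sum_{i=K-m+1}^{K}(\tilde Z_i^+)^2=\sum_{i=1}^{K-m}(\tilde Z_i^+)^2 .
\]
This cancellation is the pivotal point: in the orthogonal regime the $m$ nuisance coordinates drop out exactly, and $\lrs$ coincides in distribution with the boundary LRT statistic for $K-m$ independent PoIs under a point null.

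It then remains to read off the mixture. Since the $\tilde Z_i$ are i.i.d.\ standard normal, each $(\tilde Z_i^+)^2$ equals $0$ with probability $1/2$ and is distributed as $\chi^2_1$ with probability $1/2$, independently across $i$; hence the number of active summands is $\mathrm{Binomial}(K-m,1/2)$ and, conditionally on $j$ active summands, the sum is $\chi^2_j$. This gives $\bar w^{(m)\perp}_j=2^{-(K-m)}\binom{K-m}{j}$ for $0\le j\le K-m$ and $\bar w^{(m)\perp}_j=0$ for $j>K-m$, i.e.\ exactly Eq.~\eqref{eq:orthant} for the reduced $(K-m)$-parameter system. The stated expression for $\Delta^{(m)\perp}_j=\bar w^{(m)\perp}_j-\dot w^\perp_j$ follows by subtraction in the two ranges $j\le K-m$ and $j>K-m$, and $\sum_j\bar w^{(m)\perp}_j=2^{-(K-m)}\sum_j\binom{K-m}{j}=1$ together with $\sum_j\dot w^\perp_j=1$ yields $\sum_j\Delta^{(m)\perp}_j=0$. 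Setting $m=1$ and applying Pascal's rule $\binom{K}{j}=\binom{K-1}{j}+\binom{K-1}{j-1}$ recovers the three-case formula of Lemma~\ref{lem:delta-weights}.

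The computation is essentially mechanical once the coordinatewise structure is available, so there is no deep obstacle. The single point deserving care is the justification that Euclidean projection onto a Cartesian product of closed convex cones factorizes as the product of the block projections; this is standard, but it is precisely the step on which the exact cancellation rests, and the only place orthogonality is used. The moment correlation is introduced, $P_{\tilde C}$ no longer decouples across the two coordinate blocks and the nuisance coordinates cease to cancel exactly, which is exactly the perturbative regime treated by Theorem~\ref{thm:approx-delta-invariance}.
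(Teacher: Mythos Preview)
Your proof is correct and follows essentially the same route as the paper's own argument: work in isotropic coordinates, compute the coordinatewise projections onto $\tilde C=\mathbb{R}_+^K$ and $\tilde C_0^{(m)}=\{0\}^{K-m}\times\mathbb{R}_+^m$, observe that the last $m$ terms cancel in $\lrs$, and read off the binomial weights for the remaining $K-m$ independent half-normals. Your added remarks on why the projection factorizes over a Cartesian product and on the $m=1$ reduction via Pascal's rule are welcome elaborations, but the core line of argument is the same as the paper's.
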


\noindent The proof of this lemma is presented in Appendix 1.

\subsection{Correlated parameters}

Theorem~\ref{thm:approx-delta-invariance} substantially legitimates, from a theoretical point of view, the usage of the $\bar \chi$ weights of the point-null problem plus the orthogonal difference pattern $\Delta^{\perp}$ when demoting one parameter to nuisance.  The same argument extends, in principle, to a generic number $m\ge1$ of boundary nuisance parameters: the only change in the proof would be that the ray-null cone is replaced by an $m$-dimensional face of $\tilde C$, the perturbation bounds remain valid.  
Formally, one obtains an error bound of the form
\[
\max_{j} |\varepsilon^{(m)}_j(\Sigma;K)|
\;\le\;
c_{K,m}(\tilde C)\, \delta(\Sigma),
\]
which reduces to the result of Theorem~\ref{thm:approx-delta-invariance} when $m=1$.

However, the already loose bound of Theorem~\ref{thm:approx-delta-invariance} deteriorates further when $m > 1$, becoming too coarse to be informative. As a consequence, the approximation based on the orthogonal difference pattern is no longer sufficiently accurate to offer practical benefit.

For this reason, for the case $m > 1$ with correlated parameters we introduce a different approximation strategy that
does not rely on the orthogonal difference pattern.  The idea is heuristic but grounded in geometric considerations: each
face $F_I$ of the alternative cone contributes an ``effective'' number of degrees of freedom that depends on how the PoIs and
the nuisance coordinates interact through the Fisher information matrix.
This leads to a rank-based reaggregation of the face masses $m_S$, which
in practice yields a much sharper approximation to the LAN limit
distribution of the likelihood ratio. The construction is detailed below.

\begin{definition}[Rank-based approximate $\bar{\chi}^2$ weights with $m$ nuisance parameters]
Let the parameter be partitioned as
\(
\vartheta=(\psi,\gamma), 
\) with \( 
\psi\in\mathbb{R}^p,\ \gamma\in\mathbb{R}^m,
\)
and consider the one-sided hypothesis
\(
H_0:\psi=0,\ \gamma\ge 0
\, \, \text{vs} \, \,
H_1:\psi\ge 0,\ \gamma\ge 0.
\)
Let $K=p+m$, let $I(\vartheta_0)$ denote the Fisher information matrix at 
$\vartheta_0=0$, and write $\mathcal{P}$ and $\mathcal{N}$ for the index sets of the $p$ parameters of interest and the $m$ nuisance parameters on the boundary, respectively.

As in Section~\ref{sec:background}, let 
\[
\tilde Z = D^{1/2}P^{\top} Z
\qquad\text{and}\qquad
\tilde C = D^{1/2}P^{\top}\mathbb{R}_+^{K}.
\]
denote the transformed isotropic Gaussian score and the corresponding transformed alternative cone.  Let $\{\tilde F_S : S\subseteq\{1,\ldots,K\}\}$ be the faces of $\tilde C$,
with intrinsic-volume mass $m_S=\alpha(\tilde F_S)\,\beta(\tilde F_S)$.
For any face $\tilde F_S$, the index set $S$ lists the coordinates whose
inequality constraints are active, meaning that the projection onto
$\tilde F_S$ sets precisely the coordinates in $S$ to strictly positive values, while those outside $S$ bind at zero.

When all $K$ coordinates are treated as PoIs on the boundary, the point-null $\bar{\chi}^2$ weights are obtained by grouping face masses according to $|S\cap \mathcal{P}|$, the number of active PoIs:
\[
w_u^{\mathrm{(point)}} 
   =\sum_{S:\,|S\cap \mathcal{P}|=u} m_S,
   \qquad u=0,\ldots,p.
\]

For the ray-null problem $H_0:\psi=0,\ \gamma\ge 0$, the face $\tilde F_S$ may activate both PoIs and nuisance coordinates, say $S_\mathcal{P}=S\cap \mathcal{P}$ and $S_\mathcal{N}=S\cap \mathcal{N}$.  
To quantify how many PoI directions remain informative after accounting for the nuisance components active on that face, we consider the principal
Fisher information block $I(\vartheta_0)_{SS}$ written as
\[
I_{SS}=
\begin{pmatrix}
I_{\mathcal{P}\mathcal{P}} & I_{\mathcal{P}\mathcal{N}} \\
I_{\mathcal{NP}} & I_{\mathcal{NN}}
\end{pmatrix},
\]
and define the face-wise effective Fisher information via the Schur complement
\[
I_{\mathrm{face}}
   = I_{\mathcal{PP}} - I_{\mathcal{PN}} I_{\mathcal{NN}}^{-1} I_{\mathcal{NP}}.
\]
The corresponding \emph{effective rank}
\[
r(S)=\operatorname{rank}(I_{\mathrm{face}})
   \in\{0,1,\ldots,|S_\mathcal{P}|\}
\]
represents the number of PoI directions that remain linearly independent of the nuisance directions on face $S$.

The rank-based approximate $\bar{\chi}^2$ weights for the ray-null problem are obtained by grouping face masses according to this effective rank:
\[
w_u^{\mathrm{(ray)}}
   =\sum_{S:\,r(S)=u} m_S,
   \qquad u=0,\ldots,p,
\]
followed by normalization so that $\sum_{u=0}^p w_u^{\mathrm{(ray)}}=1$.
These $w_u^{\mathrm{(ray)}}$ are referred to as the \emph{rank-based approximate
$\bar{\chi}^2$ weights} for testing $p$ nonnegative PoIs in the
presence of $m$ correlated nuisance parameters on the boundary.
\end{definition}

\begin{remark}
The rank-based approximation groups together all faces of the transformed cone
$\tilde C$ that contribute the same effective number of degrees of freedom,
computed as the rank of the Schur complement associated with that face, and
sums their intrinsic-volume masses to form the approximate $\bar{\chi}^2$ weights.
This is an approximation in the presence of correlated parameters, but it
becomes an exact approach when the Fisher information matrix is diagonal, i.e.\
when all $K=p+m$ coordinates are orthogonal.

Indeed, if $I$ is diagonal then $I_{\mathcal{PN}}=I_{\mathcal{NP}}=0$ for every face $F_S$, and the
face-wise Schur complement reduces to $I_{\mathrm{face}} = I_{\mathcal{PP}}$.  Since
$I_{\mathcal{PP}}$ is a diagonal matrix of size $|S_\mathcal{P}|\times |S_\mathcal{P}|$ with strictly positive
entries on the active PoI coordinates, we have
\[
r(S) = \operatorname{rank}(I_{\mathrm{face}}) = |S_\mathcal{P}|.
\]
Therefore the rank-based aggregation rule
\[
w_u^{(\mathrm{ray})}
= \sum_{S:\, r(S)=u} m_S
\]
coincides exactly with the grouping of face masses by the number of active PoI, i.e.\ with the exact $\bar{\chi}^2$ weights obtained by
summing over all faces $S$ such that $|S_\mathcal{P}| = u$.
For orthogonal parameters the face masses satisfy $m_S = 2^{-K}$, and the number
of faces with $|S_\mathcal{P}|=u$ equals $\binom{p}{u} 2^m$.  Consequently
\[
w_u^{(\mathrm{ray})}
= 2^{-K} \binom{p}{u} 2^m
= 2^{-p}\binom{p}{u},
\qquad u=0,\dots,p,
\]
which are precisely the classical $\bar\chi^2$ weights for $p$
independent one-sided parameters.  Hence, in the independent case, the proposed
rank-based weights recover the exact result.

Deviations from the exact ray-null $\bar\chi^2$ weights occur precisely when the PoI directions on a face are strongly collinear with the nuisance directions, so that
$I_{\mathrm{face}}$ has reduced rank; in this sense, the approximation is most accurate when the PoI components retain substantial information that is not explainable by the
nuisance block.
\end{remark}

\section{Numerical validation} \label{sec:simulations}

In this section we assess the accuracy of the $\bar{\chi}^2$ weight approximations proposed in the preceding sections for the correlated case. The numerical validation of Lemma \ref{lem:delta-weights} and Lemma \ref{lem:delta-weights-m} is provided in Appendix 2. All simulations are conducted directly in the LAN regime, by drawing
samples from the Gaussian limit experiment
\(
Z \sim \mathcal{N}_K(0,\Sigma) \) with \( \Sigma = I(\theta_0)^{-1}
\).  
In each configuration we compare the empirical distribution of the
likelihood-ratio statistic obtained with $10^5$ Monte Carlo (MC) repetitions to the corresponding $\bar{\chi}^2$ mixture.

For each configuration, we provide figures comparing the cumulative distribution function (CDF) of both distributions. The agreement is quantified through the metrics displayed within each panel: a comparison between the 50\textsuperscript{th} and 95\textsuperscript{th} empirical and theoretical quantiles of $\lrs$; the Kolmogorov-Smirnov type distance
\[
D_\infty \;=\; \sup_{t\ge 0} \bigl| F_{\mathrm{emp}}(t) - F_{\mathrm{mix}}(t) \bigr|,
\]
i.e., the maximal vertical difference between the empirical cumulative distribution function \(F_{\mathrm{emp}}\) and the theoretical mixture CDF \(F_{\mathrm{mix}}\); a calibration diagnostic at nominal level \(\alpha=0.05\), defined as
\[
\text{Tail}/\alpha
\;=\;
\frac{1 - F_{\mathrm{emp}}(t_{0.95}^{\mathrm{mix}})}{\alpha},
\]
where \(t_{0.95}^{\mathrm{mix}}\) denotes the \(95\%\) quantile of the theoretical mixture.

\subsection{Results for Theorem 1}
To assess the performance of the orthogonal weight-difference approximation in correlated settings, we examine $K \in \{4,7,10\}$ using mildly and strongly correlated covariance matrices $\Sigma$, obtained by drawing correlations uniformly from [0, 0.5] and [0.5, 0.9] respectively, and transforming to valid positive-definite matrices. 

Figure~\ref{fig:theorem1} reports the results for the mild-correlation regime: we observe that the empirical CDFs of $\lrs$ remain almost indistinguishable from the analytic approximation (red) for all $K \in \{4,7,10\}$. Although slightly larger than in the orthogonal case -- for which we obtained a distance equal to 0.002 for all the three values of $K$, see Appendix 2 --, the Kolmogorov distance metrics $D_\infty$ reported in each panel remain very small, the largest value being 0.026 for $K = 10$. The upper-tail diagnostic is also close to one, the lowest value being 0.895 for $K = 10$, indicating that any distortion of the distribution, including its extreme upper tail, is quite moderate. Thus, even in a mild correlated regime, the orthogonal closed-form weights provide an effective approximation to the true $\bar{\chi}^2$ weights after demoting one parameter to nuisance. At the same time, the diagnostics follow the qualitative trend predicted by the theorem, as the discrepancies increase with $K$.

A comparison with the strong-correlation regime in Figure~\ref{fig:theorem1_bis} shows increasing discrepancies with $K$, though still mild: for each value of $K$, both the Kolmogorov distance $D_\infty$ and the upper-tail diagnostic deviate more from their ideal values than in the mildly correlated case, indicating larger perturbations of the $\bar{\chi}^2$ weights. Nevertheless, the discrepancies remain moderate: $D_\infty$ stays well below $0.1$, and the tail diagnostic departs from one by at most a few tens of percent. The empirical CDFs are still very close to the analytic approximation and the reported quantile differences remain small, so the orthogonal closed-form weights continue to provide a remarkably accurate approximation to the true $\bar{\chi}^2$ weights even under strong correlation after demoting one parameter to nuisance.

These findings might indicate that the remainder term in~\eqref{eq:delta-expansion-final} inevitably increases with the number of constrained parameters, also in light of the theoretical bound from Theorem~\ref{thm:approx-delta-invariance} and of Corollary~\ref{cor:equicorr-bound} which shows that $\delta(\Sigma_\rho)$ grows with $K$ in the equicorrelated case.
To disentangle the effect of dimension from the effect of correlation structure, we examine a controlled equicorrelation setting $\Sigma_\rho$ with fixed $\rho=0.5$ and report in Figure~\ref{fig:theorem1_ter} the Kolmogorov distance $D_\infty$ for $K=2,\ldots,10$, alongside the corresponding anisotropy index. Despite the monotone growth of $\delta(\Sigma_\rho)$ with $K$, the Kolmogorov distances remain uniformly small and exhibit no systematic increase, indicating that the larger discrepancies observed previously for $K=10$ under strong correlation are primarily a consequence of increased anisotropy rather than the dimensionality itself. This confirms that the orthogonal closed-form correction $\Delta_j^\perp$ provides a stable and accurate approximation to the true ray-null $\bar{\chi}^2$ mixture across a wide range of dimensions, and that correlation strength, not the larger $K$, is the dominant source of perturbation in practice.

\begin{figure}
    \centering
    \includegraphics[width=0.32\linewidth]{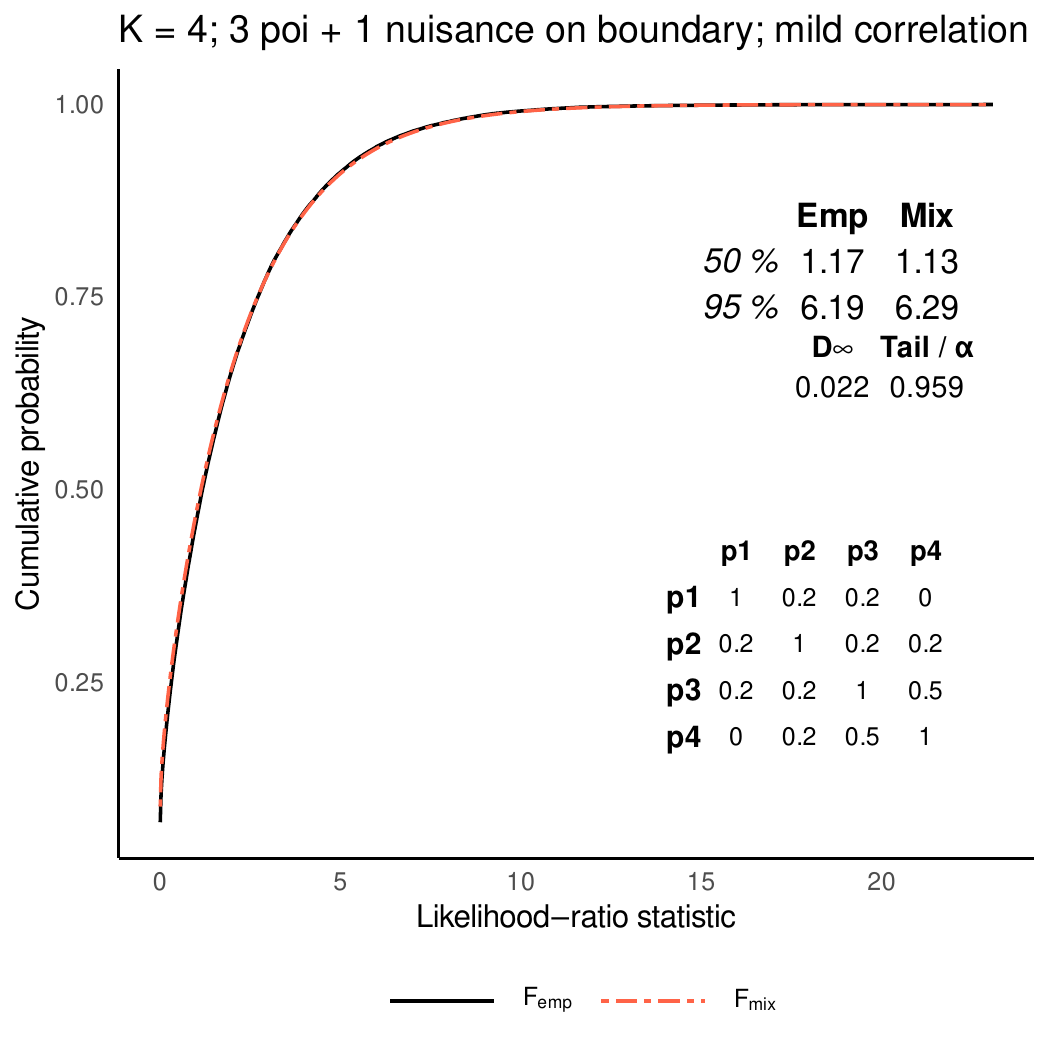}
    \includegraphics[width=0.32\linewidth]{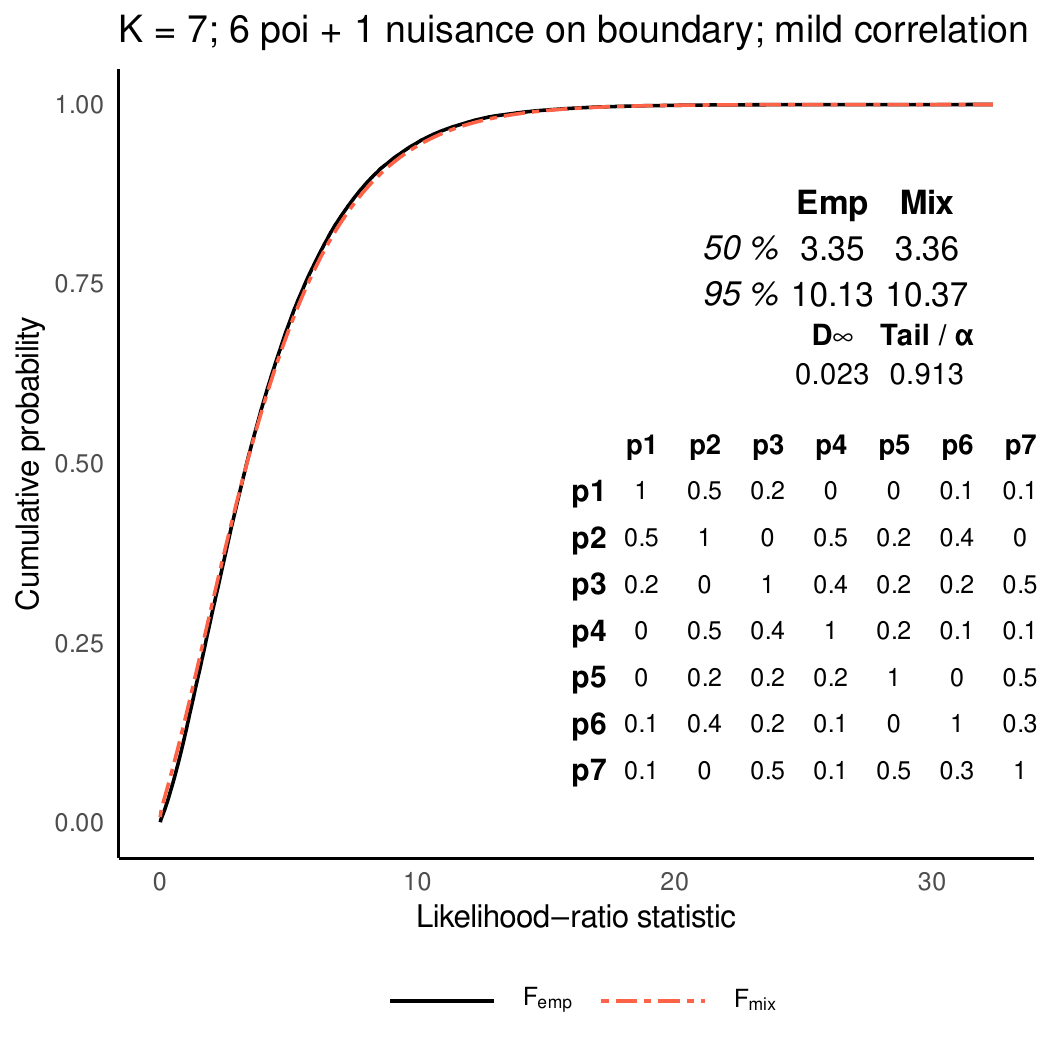}
    \includegraphics[width=0.32\linewidth]{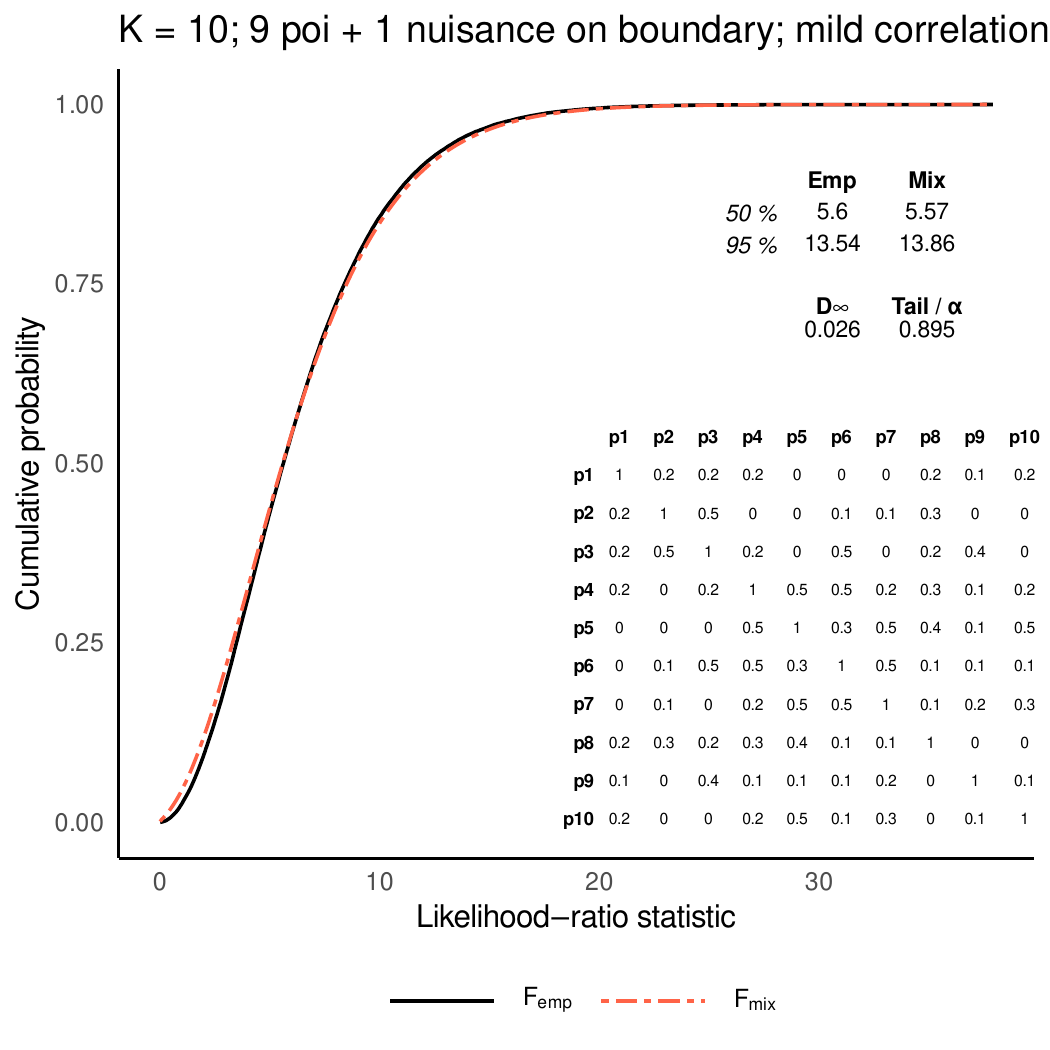}
    \caption{Empirical (solid black) versus approximated theoretical (dashed red) CDFs of $\lrs$ for $K=4,7,10$ in the mildly correlated case with one nuisance parameter on the boundary.}
    \label{fig:theorem1}
\end{figure}

\begin{figure}
    \centering
    \includegraphics[width=0.32\linewidth]{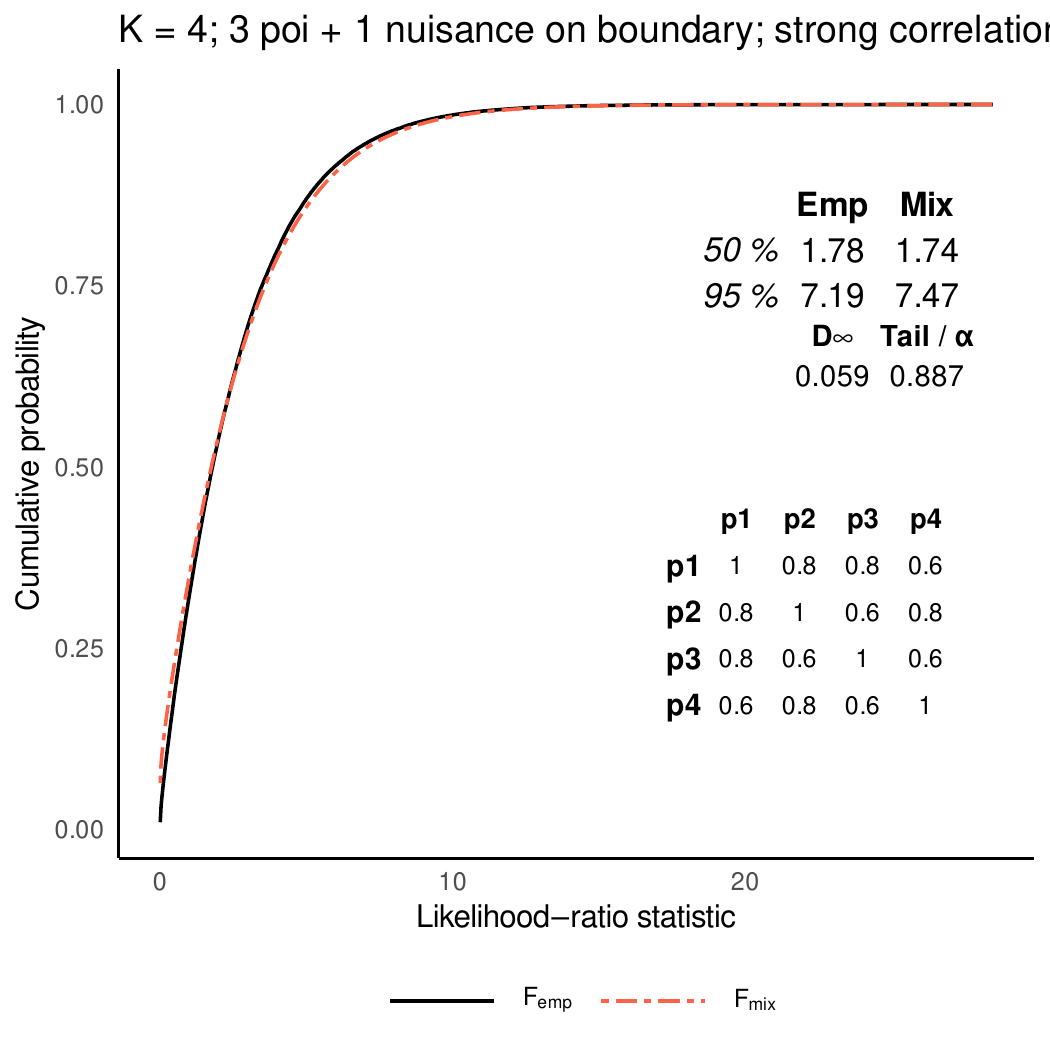}
    \includegraphics[width=0.32\linewidth]{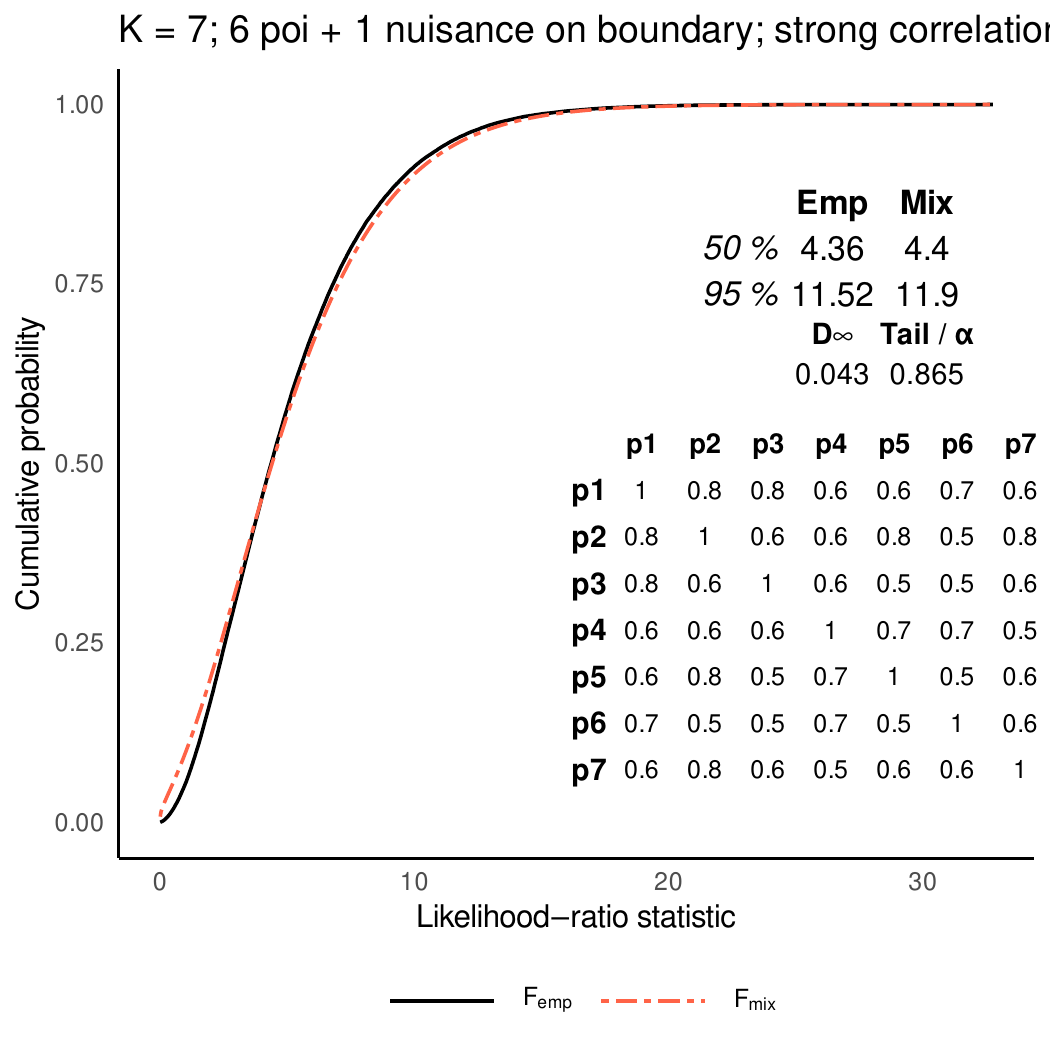}
    \includegraphics[width=0.32\linewidth]{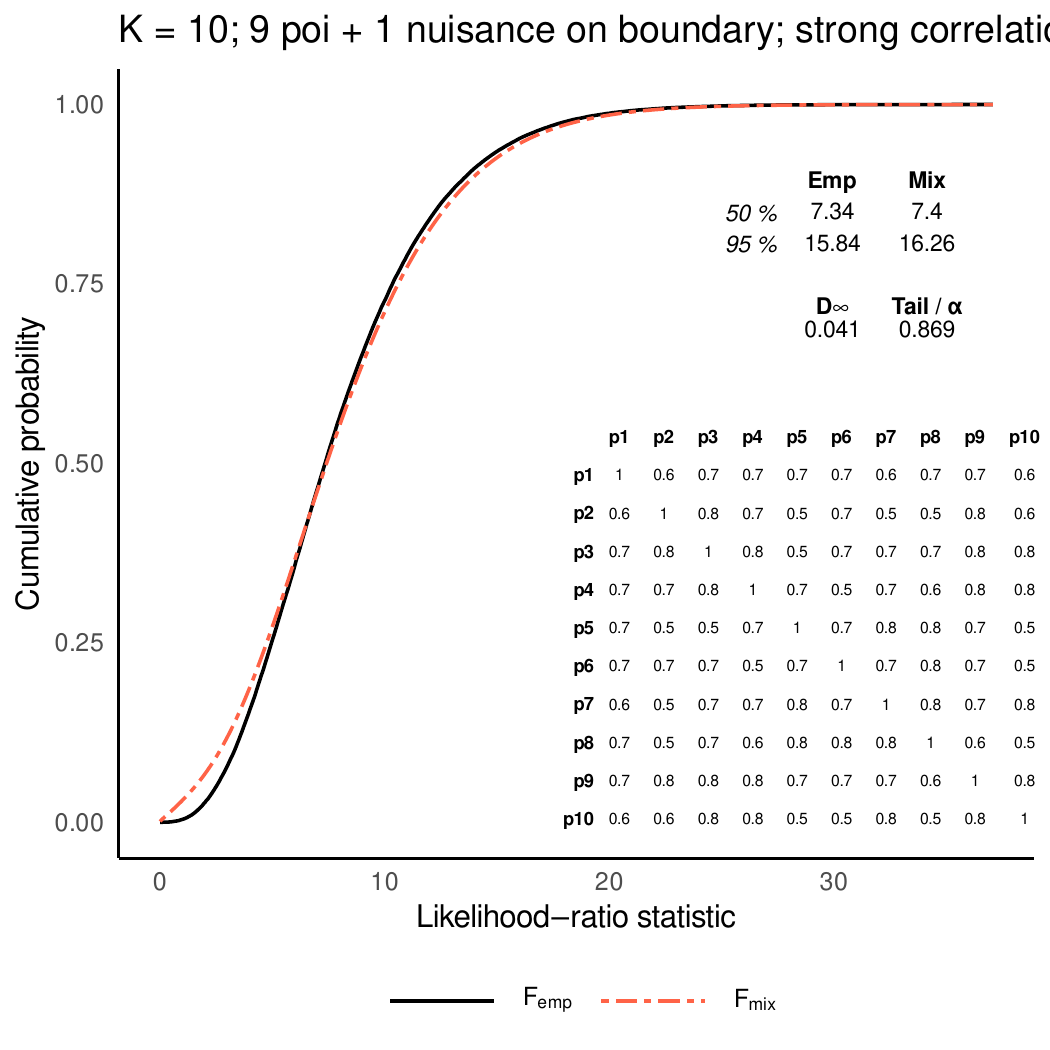}
    \caption{Empirical (solid black) versus approximated theoretical (dashed red) CDFs of the $\lrs$ for $K=4,7,10$ in the strongly correlated case with one nuisance parameter on the boundary.}
    \label{fig:theorem1_bis}
\end{figure}

\begin{figure}
    \centering
    \includegraphics[width=0.6\linewidth]{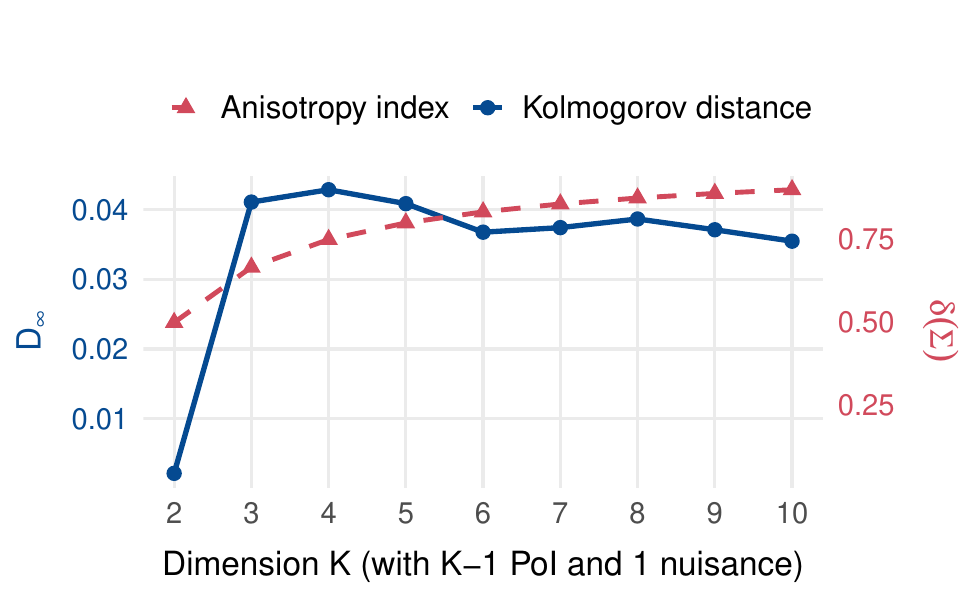}
    \caption{Kolmogorov distance $D_\infty$ (blue) and anisotropy index $\delta(\Sigma_\rho)$ (red) as functions of $K$ for $\Sigma_\rho$ with an equicorrelated covariance with $\rho=0.5$ for the case of $K-1$ PoIs and 1 nuisance parameter on the boundary.}
    \label{fig:theorem1_ter}
\end{figure}

\subsection{Results for the rank-based approximation}

We now turn to the case of multiple nuisance parameters on the boundary ($m>1$) under general correlation structures, for which we propose a heuristic approximation based on the effective degrees of freedom contributed by each face of the alternative cone—exact in the orthogonal case, and empirically adequate beyond it. Figures~\ref{fig:heur1}–\ref{fig:heur2} reveal a consistent pattern: the accuracy of the rank-based mixture is affected far more by the total dimension $K$ of the PoI vector than by correlation strength, with the latter playing a clearly secondary role.

In the mildly correlated regime, the empirical and theoretical CDFs remain very close for $K=4$ and $K=7$ ($D_\infty = 0.030$ and $0.049$, respectively, with tail ratios around $1.27$–$1.32$). For $K=10$, discrepancies become more visible: $D_\infty$ rises to $0.123$ and the $5\%$ tail ratio to $1.81$, signalling that the theoretical mixture increasingly underestimates the right tail. The approximation remains usable, but the tail miscalibration should be explicitly accounted for in inferential applications.

In the strongly correlated setting, the deterioration relative to the mild case is less dramatic than one might expect and is pronounced mainly at $K=4$, where $D_\infty$ increases from $0.03$ to $0.10$ and the tail ratio from $1.27$ to $1.90$. For $K=7$ and $K=10$, the diagnostics remain similar to the mildly correlated regime, reinforcing the conclusion that the growth of the PoI dimension -- not correlation strength -- is the dominant factor driving deviations between empirical and theoretical CDFs.

Figure~\ref{fig:heur3} further clarifies this point: for equicorrelated structures $\Sigma_{\rho}$ with fixed $\rho=0.5$, $D_\infty$ shows a mild upward trend with increasing $K$, levelling off at moderate dimensions. Here, $D_\infty$ no longer tracks the anisotropy index as closely as in the single-nuisance case (Figure~\ref{fig:theorem1_ter}), suggesting that with multiple boundary nuisances the approximation error depends more on the interplay between PoI dimensionality and cone geometry than on covariance anisotropy alone. A complementary perspective is provided in Figure~\ref{fig:heur4}, where we fix $K=10$ and vary $m$ from $1$ to $9$. Since anisotropy is constant in $m$, this isolates the effect of shifting dimensions between PoI and nuisance blocks: $D_\infty$ increases with $m$, confirming that a larger number of constrained coordinates—whether PoI or nuisance—generally reduces accuracy. Yet even in the extreme case of $m=9$, the maximum discrepancy remains around $15\%$, a moderate deviation given the severity of the constraint structure.

Overall, the discrepancies remain limited in magnitude and largely predictable in direction: the approximation captures the global shape well but tends to be anti-conservative in the upper tail. Thus, the rank-based approach provides a practically useful analytic surrogate -- particularly in settings with several boundary nuisance parameters, where exact $\bar{\chi}^2$ weights are unavailable -- provided that its systematic tail behaviour is acknowledged and, when needed, corrected for via calibration.

\begin{figure}[H]
    \centering
    \includegraphics[width=0.32\linewidth]{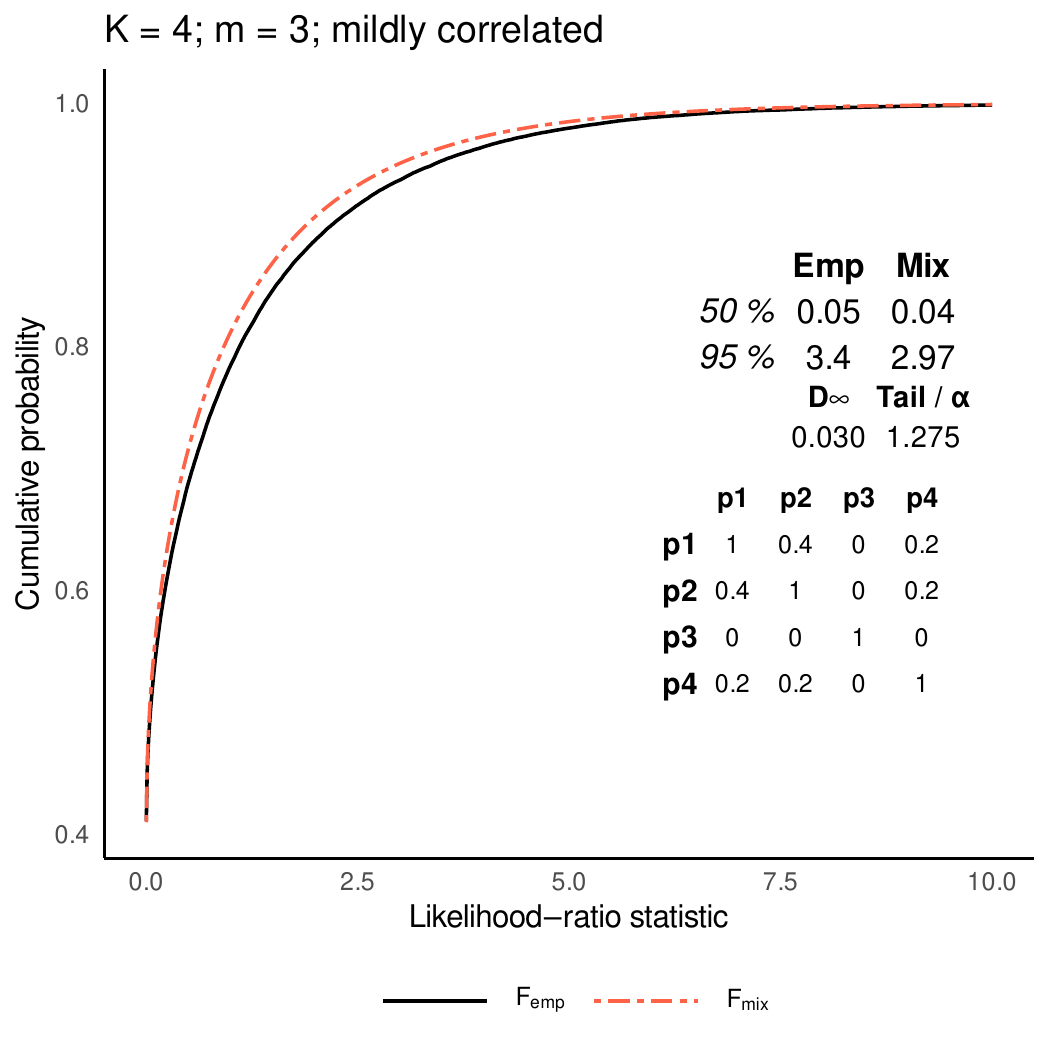}
    \includegraphics[width=0.32\linewidth]{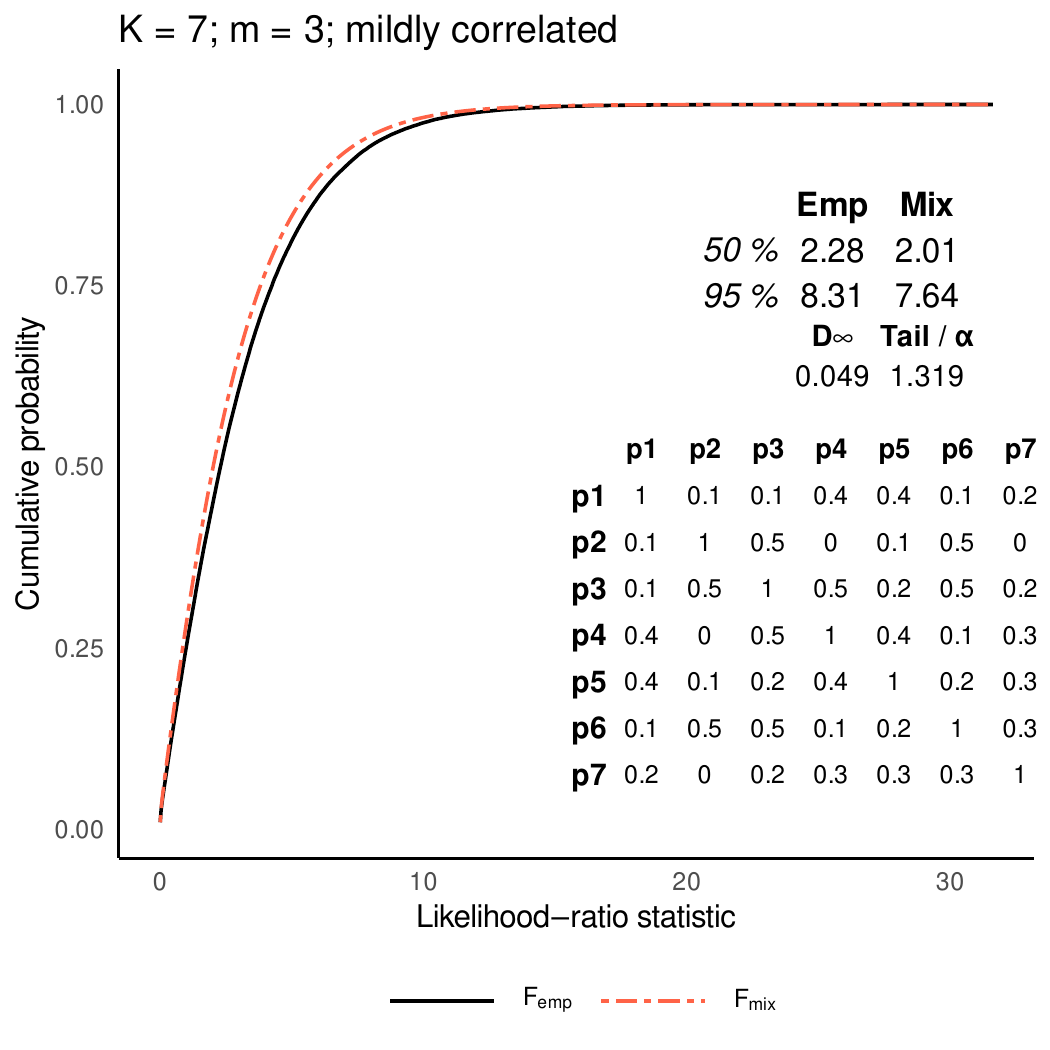}
    \includegraphics[width=0.32\linewidth]{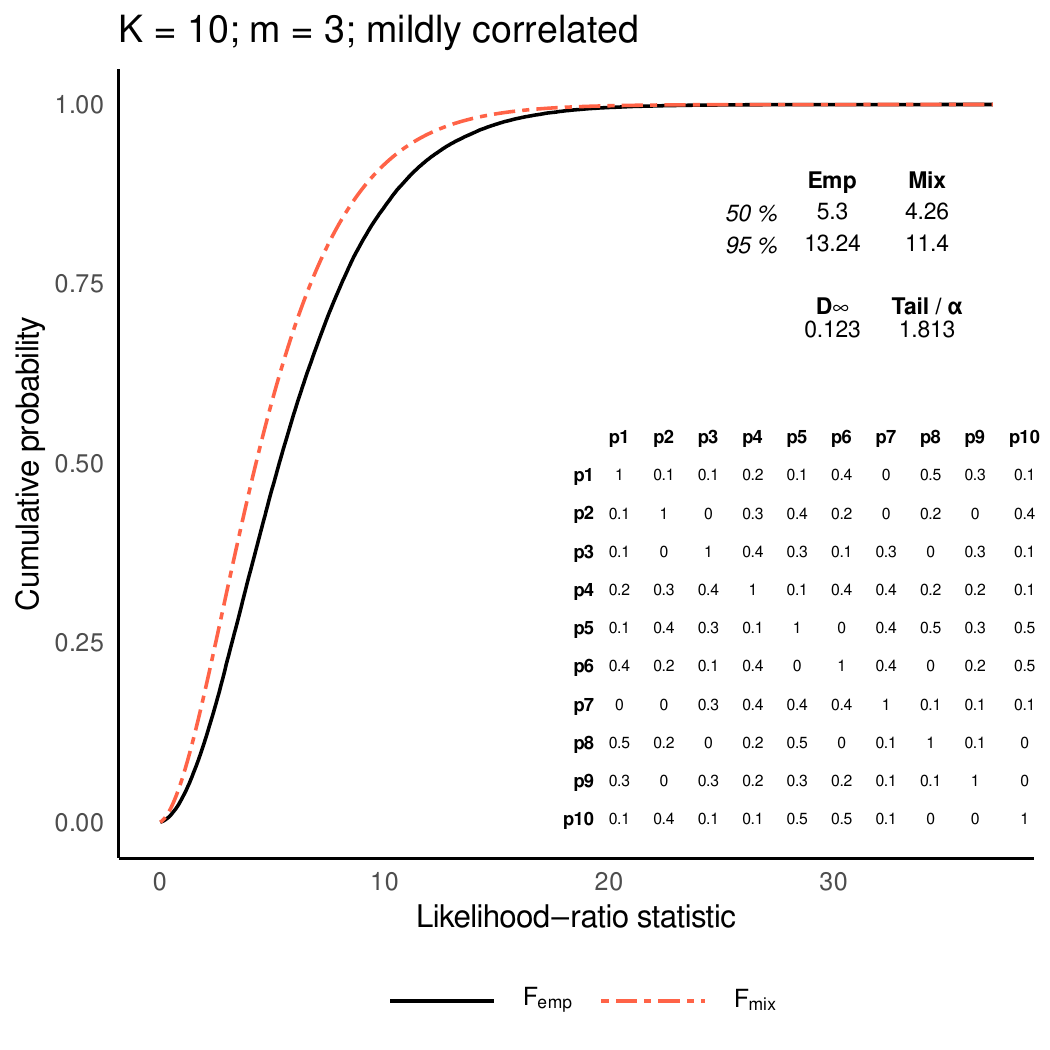}
    \caption{Empirical (solid black) versus approximated theoretical (dashed red) CDFs of $\lrs$ for $K=4,7,10$ in the mildly correlated case with three nuisance parameters on the boundary.}
    \label{fig:heur1}
\end{figure}

\begin{figure}[H]
    \centering
    \includegraphics[width=0.32\linewidth]{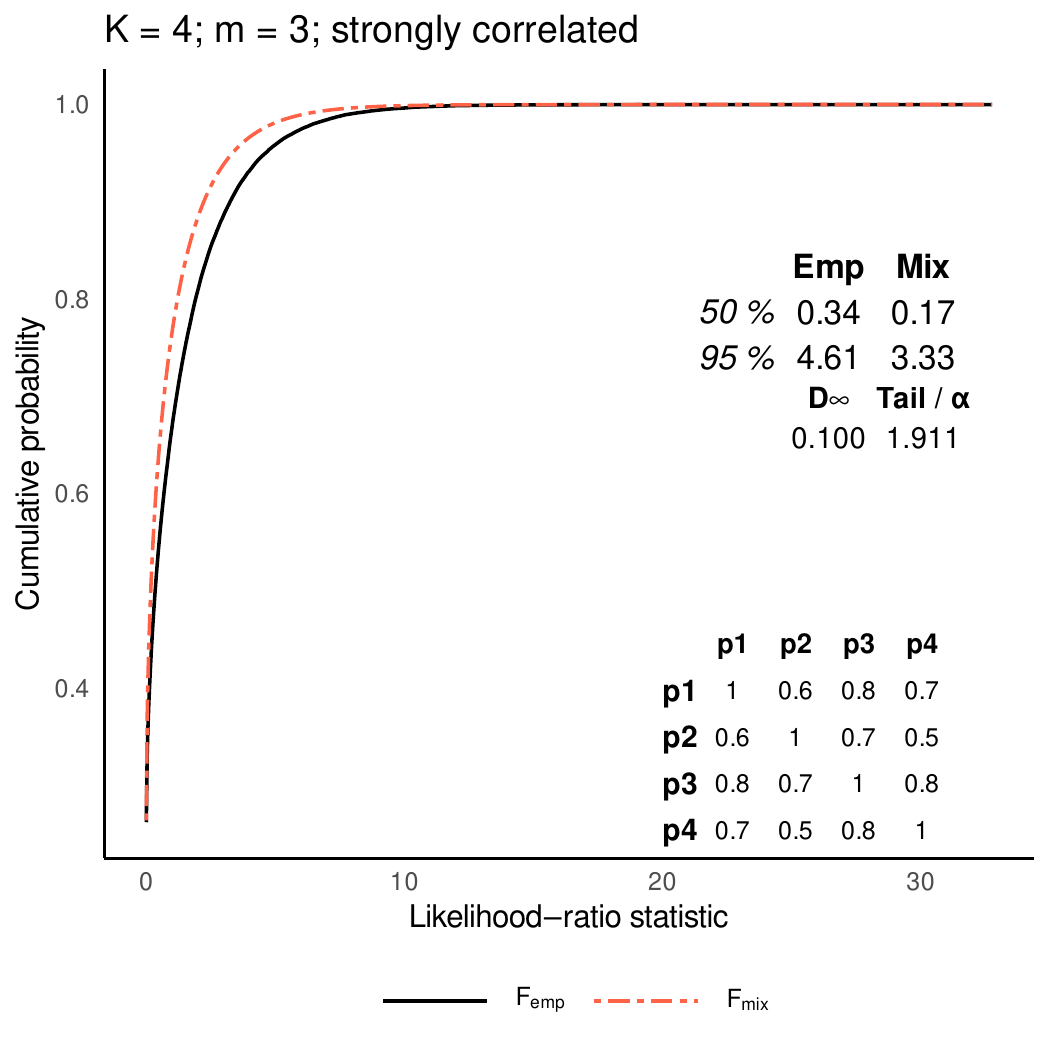}
    \includegraphics[width=0.32\linewidth]{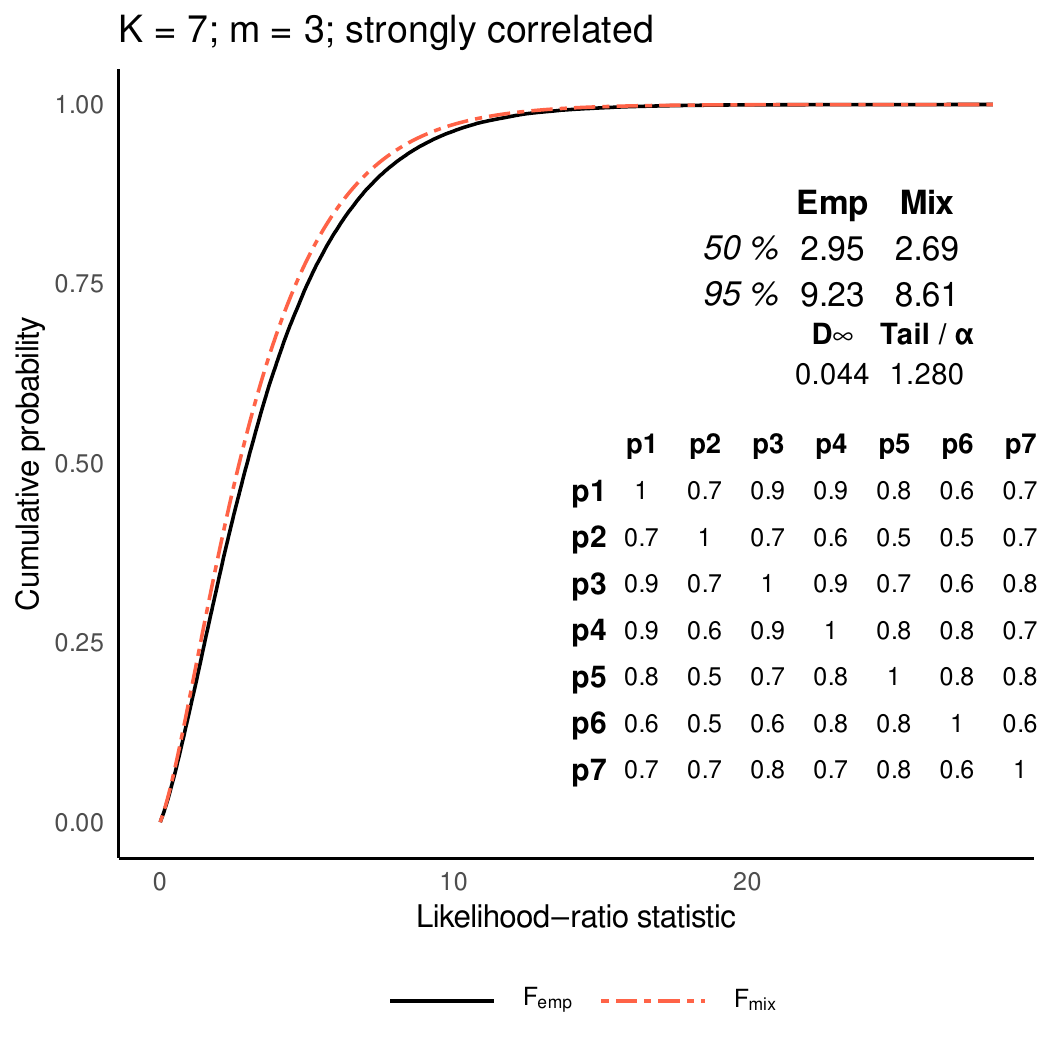}
    \includegraphics[width=0.32\linewidth]{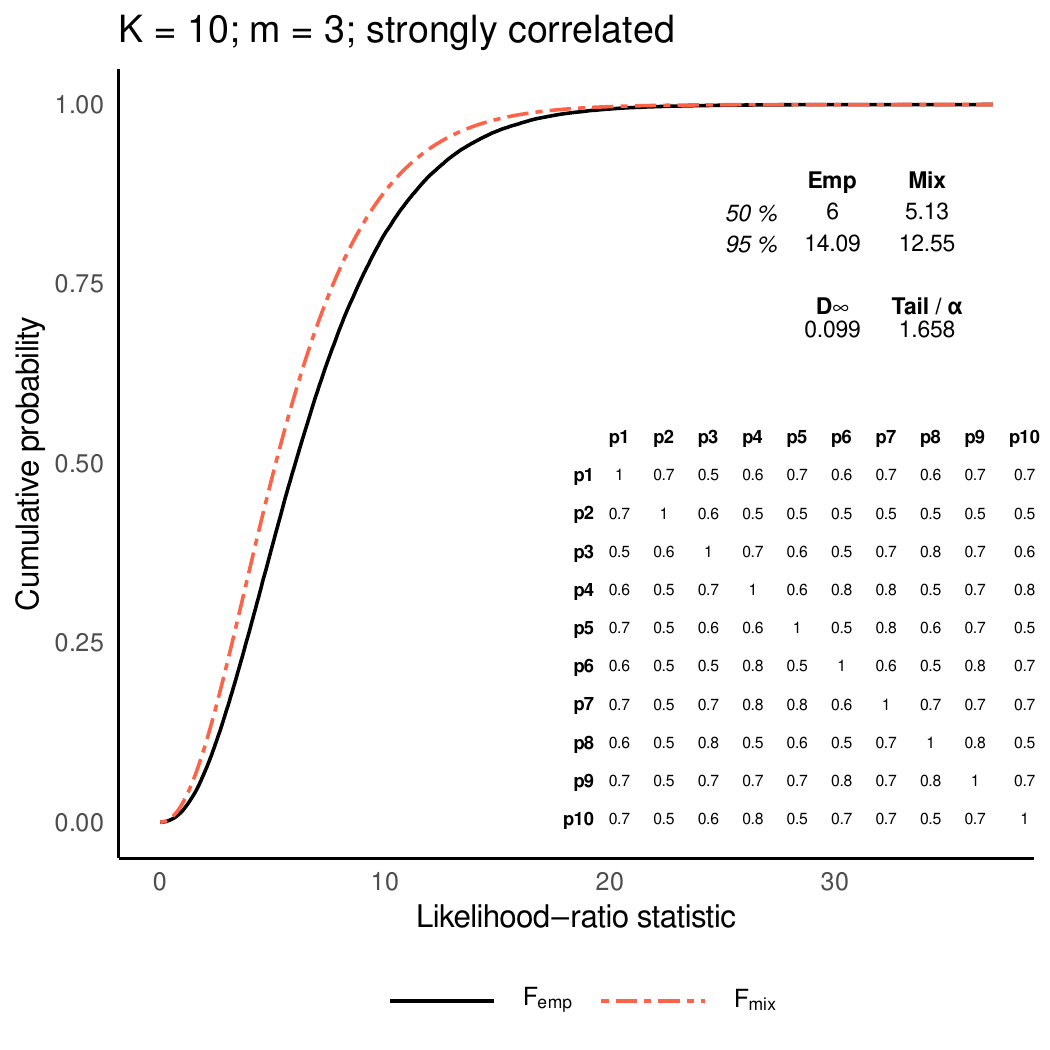}
    \caption{Empirical (solid black) versus approximated theoretical (dashed red) CDFs of $\lrs$ for $K=4,7,10$ in the strongly correlated case with three nuisance parameters on the boundary.}
    \label{fig:heur2}
\end{figure}

\begin{figure}[H]
    \centering
    
    %--- First subfigure ---
    \begin{subfigure}[t]{0.48\linewidth}
        \centering
        \includegraphics[width=\linewidth]{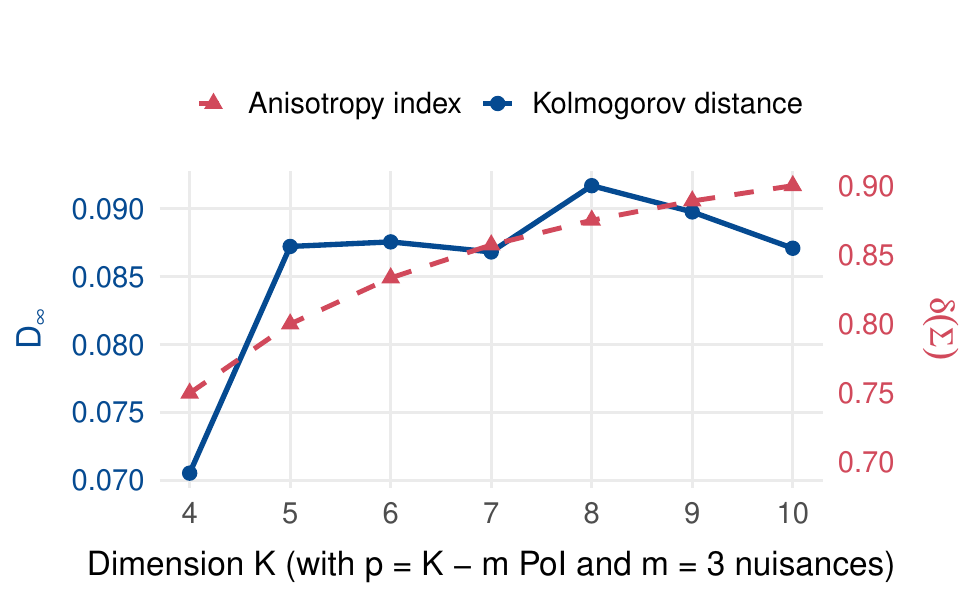}
        \caption{Kolmogorov distance $D_\infty$ (blue) and anisotropy index 
        $\delta(\Sigma_\rho)$ (red) as functions of $K$ for $\Sigma_\rho$ with 
        equicorrelated covariance ($\rho = 0.5$) in the case of $K-3$ PoIs and 
        3 nuisance parameters on the boundary.}
        \label{fig:heur3}
    \end{subfigure}
    \hfill
    %--- Second subfigure ---
    \begin{subfigure}[t]{0.48\linewidth}
        \centering
        \includegraphics[width=\linewidth]{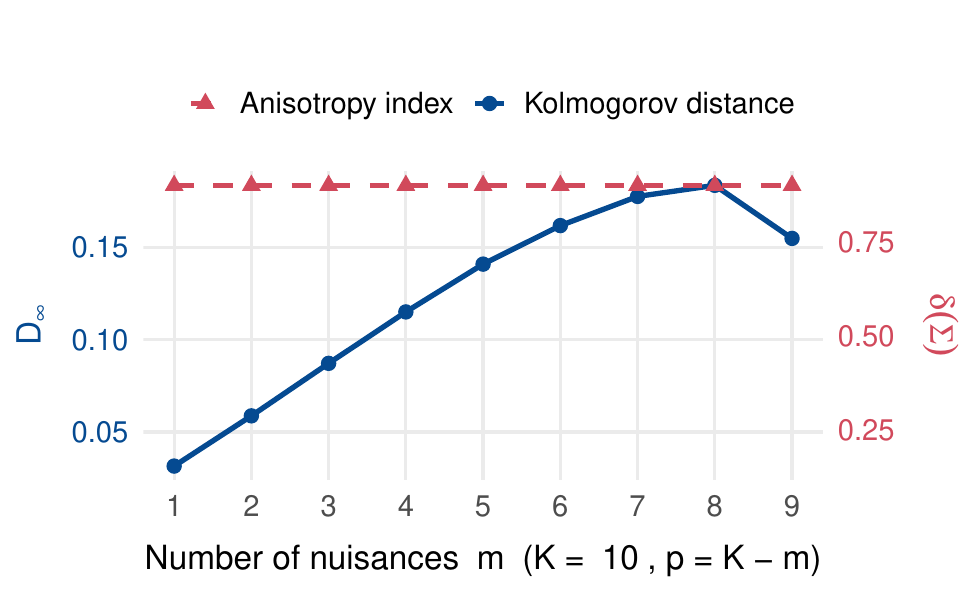}
        \caption{Kolmogorov distance $D_\infty$ (blue) and anisotropy index 
        $\delta(\Sigma_\rho)$ (red) as functions of the number of nuisance 
        parameters $m$ for $\Sigma_\rho$ with equicorrelated covariance 
        ($\rho = 0.5$), with $p = K - m$ PoIs and fixed total dimension $K=10$.}
        \label{fig:heur4}
    \end{subfigure}

    \caption{Comparison of Kolmogorov distance and anisotropy index across 
    two configurations.}
\end{figure}

\section*{Acknowledgements}
This work has been supported by the Research Council of Norway (RCN) through the FRIPRO PLUMBIN' (grant no.\ 323985) and the Centre of Excellence INTEGREAT (grant no.\ 332645). The author extends sincere gratitude to Riccardo De Bin for the insightful feedback and sustained support, and to Paolo Bertinelli for many interesting discussions.

\bibliography{paper-ref}

\appendix

\section*{Appendix 1} \label{appA}
\subsection*{Proofs of Lemma 1 and Lemma 2}  

\noindent \textbf{Lemma 1}

\begin{proof}
After the linear transformation described in Section~\ref{sec:background},  
we have $\tilde Z=(\tilde Z_1,\ldots,\tilde Z_K)^\top\sim N_K(0,\mathbb{I}_K)$.  
For the null cone $\tilde C^{\mathrm{pt}}_0=\{0\}$, i.e.\ when all $K$ parameters are of interest,  
the projection $P_{\tilde C}(\tilde Z)$ keeps the positive coordinates of $\tilde Z$  
and sets to zero the negative ones. Hence, the face on which  
$P_{\tilde C}(\tilde Z)$ lies has dimension equal to the number of positive coordinates,  
$j=\#\{i:\tilde Z_i>0\}$. Therefore, since the coordinates of $\tilde Z$ are independent and symmetric, we have
\(
\dot w_j^\perp= \Pr \left(\#\{i:\tilde Z_i>0\}=j\right)
          = 2^{-K}\binom{K}{j}
\) for $j=0,\ldots,K$.

Now consider the case where the last of the $K$ parameters is treated as nuisance, so that the null cone becomes $\tilde C^{\mathrm{ray}}_0=\mathbb{R}_+ e_K=\{(0,\ldots,0,t):t\ge 0\}$.  
The orthogonal projection onto $\tilde C^{\mathrm{ray}}_0$ is then  
$P_{\tilde C_0}(\tilde Z)=\bigl(0,\ldots,0,(\tilde Z_K)_+\bigr)$,  
where $(x)_+=\max\{x,0\}$, but the difference in squared projection norms eliminates the contribution of the nuisance coordinate, since
\[
\lrs = \|P_{\tilde C}(\tilde Z)\|^2-\|P_{\tilde C_0}(\tilde Z)\|^2
  = \sum_{i=1}^{K} (\tilde Z_i^+)^2 - (\tilde Z_K^+)^2
  = \sum_{i=1}^{K-1} (\tilde Z_i^+)^2.
\]
The relevant face of $\tilde C$ determining the projection therefore depends only on which of the first $K-1$ coordinates are positive:
its dimension is thus $j=\#\{i\le K-1:\tilde Z_i>0\}$, irrespective of the value of $\tilde Z_K$.

Again, by independence and symmetry, we have
\(
\bar w_j^\perp
 = \Pr \bigl(\#\{i\le K-1:\tilde Z_i>0\}=j\bigr)
 = 2^{-(K-1)}\binom{K-1}{j} \) for \(j=0,\ldots,K,
\)
with $\binom{K-1}{K}=0$ as, by definition, $\binom{n}{k}=0$ for $k>n$.  

\smallskip

Define the differences $\Delta_j^\perp= \bar w_j^\perp - \dot w_j^\perp$.
Using Pascal’s identity $\binom{K}{j}=\binom{K-1}{j}+\binom{K-1}{j-1}$,
for $1\le j\le K-1$ we obtain
\[
\Delta^\perp_j
= 2^{-(K-1)}\binom{K-1}{j} \;-\; 2^{-K}\binom{K}{j}
= 2^{-K}\!\left[2\binom{K-1}{j}-\binom{K-1}{j}-\binom{K-1}{j-1}\right]
= 2^{-K}\!\left[\binom{K-1}{j}-\binom{K-1}{j-1}\right].
\]
For the indices at the extremes,
\[
\Delta^\perp_0
= 2^{-(K-1)}\binom{K-1}{0} - 2^{-K}\binom{K}{0}
= 2^{-(K-1)} - 2^{-K}
= 2^{-K},
\]
and, since $\binom{n}{k}=0$ for $k>n$,
\[
\Delta^\perp_K
= 2^{-(K-1)}\binom{K-1}{K} - 2^{-K}\binom{K}{K}
= 0 - 2^{-K}
= -\,2^{-K}.
\]
Thus, the closed-form expression is proved for all $j=0,1,\ldots,K$, and it follows
\(
\bar w^\perp_j = \dot w^\perp_j + \Delta^\perp_j,
\) for $j=0,\ldots,K$.
As a check, $\sum_{j=0}^{K}\Delta^\perp_j
= \sum_{j=0}^{K}\bar w^\perp_j - \sum_{j=0}^{K} \dot w^\perp_j
= 1-1=0$, confirming that the transformation merely redistributes the mixture mass.
\end{proof}

\vspace{.5cm}

\noindent \textbf{Lemma 2}

\begin{proof}
The argument is the same as in the proof of Lemma~\ref{lem:delta-weights}, applied to the first $K-m$ coordinates. After the linear transformation in Section~\ref{sec:background},
$\tilde Z=(\tilde Z_1,\ldots,\tilde Z_K)^\top\sim N_K(0,\mathbb{I}_K)$
with independent symmetric coordinates. For the point-null cone $C^{\mathrm{pt}}_0=\{0\}$ the weights are as before,
\(
\dot w^\perp_j = 2^{-K}\binom{K}{j}
\). Now let the last $m$ coordinates be nuisance, so that
$\tilde C^{(m)}_0 = \mathbb{R}_+^m$ in the last $m$ axes. The projection
onto $\tilde C^{(m)}_0$ is
\(
P_{\tilde C^{(m)}_0}(\tilde Z)
  = \bigl(0,\ldots,0,(\tilde Z_{K-m+1})_+,\ldots,(\tilde Z_K)_+\bigr),
\)
and 
\(
\lrs
  = \|P_{\tilde C}(\tilde Z)\|^2 - \|P_{\tilde C^{(m)}_0}(\tilde Z)\|^2
  = \sum_{i=1}^{K} (\tilde Z_i^+)^2
      \;-\;\sum_{i=K-m+1}^{K} (\tilde Z_i^+)^2
  = \sum_{i=1}^{K-m} (\tilde Z_i^+)^2
\). Thus the relevant face of $\tilde C$ depends only on the signs of the
first $K-m$ coordinates, \(
j = \#\{i \le K-m : \tilde Z_i>0\}
\). Independence and symmetry give
\[
\bar w^{(m)\perp}_j
  = \Pr\!\left(\#\{i\le K-m : \tilde Z_i>0\}=j\right)
  = 2^{-(K-m)} \binom{K-m}{j},
  \qquad j=0,\ldots,K-m,
\]
and $\bar w^{(m)\perp}_j=0$ for $j>K-m$.

Subtracting $\dot w^{\perp}_j$ gives the stated formulas for $\Delta^{(m)\perp}_j$.
Finally,
\[
\sum_{j=0}^{K}\bar w^{(m)\perp}_j
  = \sum_{j=0}^{K-m} 2^{-(K-m)} \binom{K-m}{j}
  = 1,
\qquad
\sum_{j=0}^{K}\Delta^{(m)\perp}_j = 0,
\]
so the transformed sequence is a valid set of $\bar{\chi}^2$ weights.
\end{proof}

\section*{Appendix 2} \label{appB}
\subsection*{Numerical validation of Lemma 1 and Lemma 2}

\noindent We first validate Lemma~\ref{lem:delta-weights}, which gives the exact change in $\bar{\chi}^2$ weights when one of $K$ orthogonal
parameters is demoted from PoI to nuisance parameter on the boundary. For $K \in \{4,7,10\}$, we consider the orthogonal case $\Sigma = \mathbb{I}_K$ and compare the empirical distribution of $\lrs$ to the $\bar{\chi}^2$ mixture with weights
\(
\bar{w}_j = \dot{w}_j + \Delta^\perp_j \) with
\(\dot{w}_j = 2^{-K}\binom{K}{j},
\)
where $\Delta^\perp_j$ is given by Lemma~\ref{lem:delta-weights}.  We see in Figure~\ref{fig:lemma1} that, in all configurations, the empirical CDF is indistinguishable, up to Monte Carlo error, from the theoretical mixture, and the quantile differences are negligible.  
This confirms the correctness of the closed-form redistribution pattern
\eqref{eq:deltaj} in the orthogonal case, and illustrates the intuitive picture that probability mass is shifted symmetrically across adjacent degrees of freedom when a single parameter is demoted to nuisance.

\vspace{0.5cm}

To validate Lemma~\ref{lem:delta-weights-m}, we now increase the number of nuisance parameters from $m=1$ to $m=3$ and compare the distribution of $\lrs$ to the analytic $\bar{\chi}^2$ mixture with weights given by the lemma. The exactness of the weight differences is confirmed by Figure~\ref{fig:lemma2} where in all configurations the analytic curve provides a perfect match, up to MC error, to the empirical CDF.  

\begin{figure}[H]
    \centering
    \includegraphics[width=0.32\linewidth]{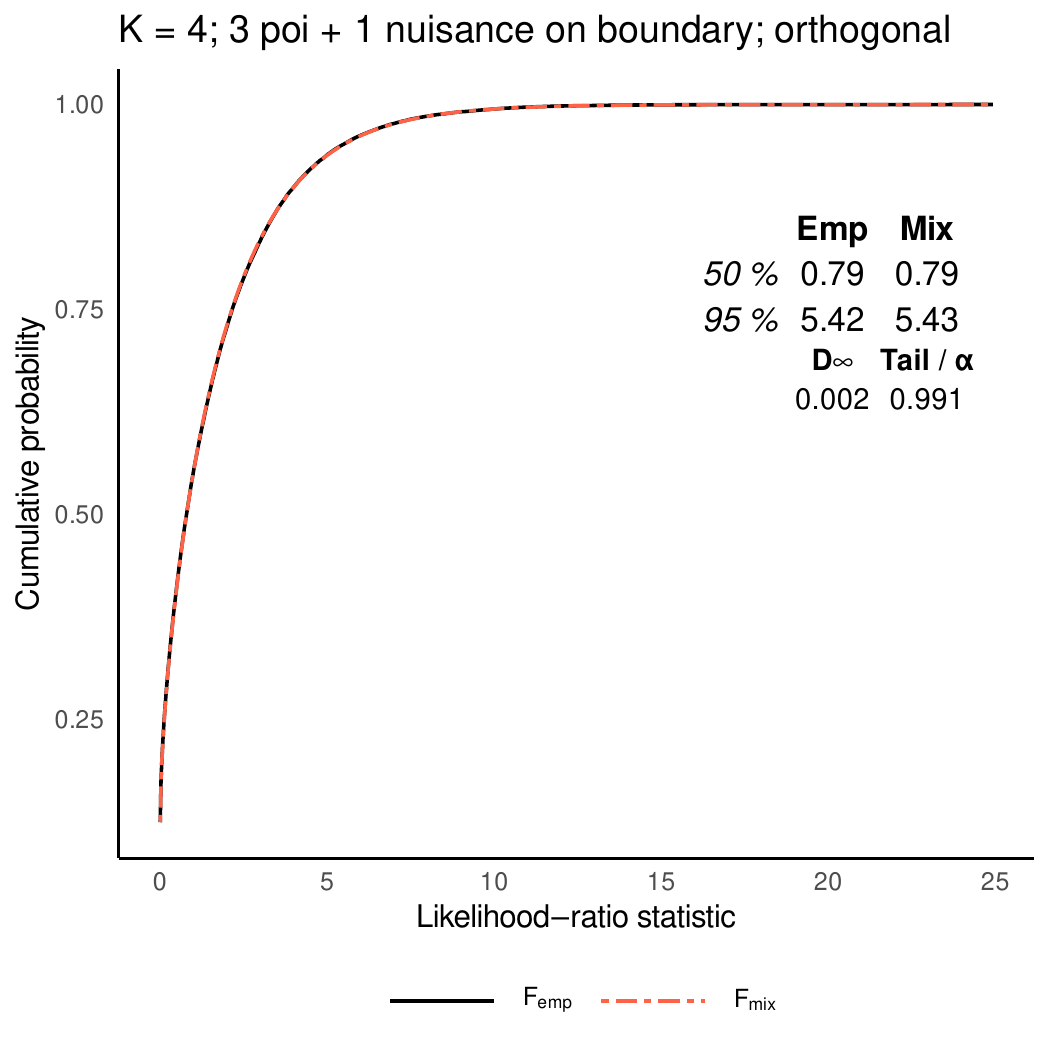}
    \includegraphics[width=0.32\linewidth]{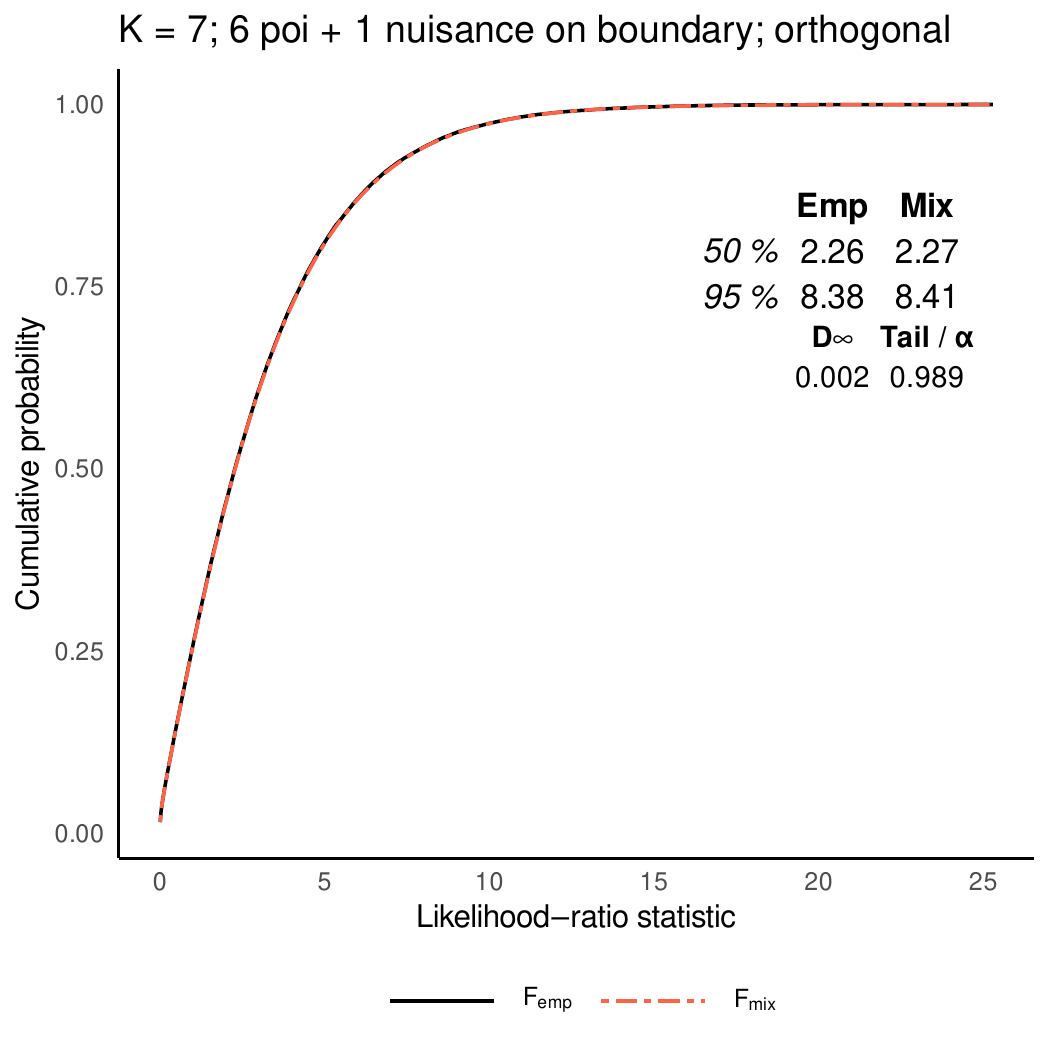}
    \includegraphics[width=0.32\linewidth]{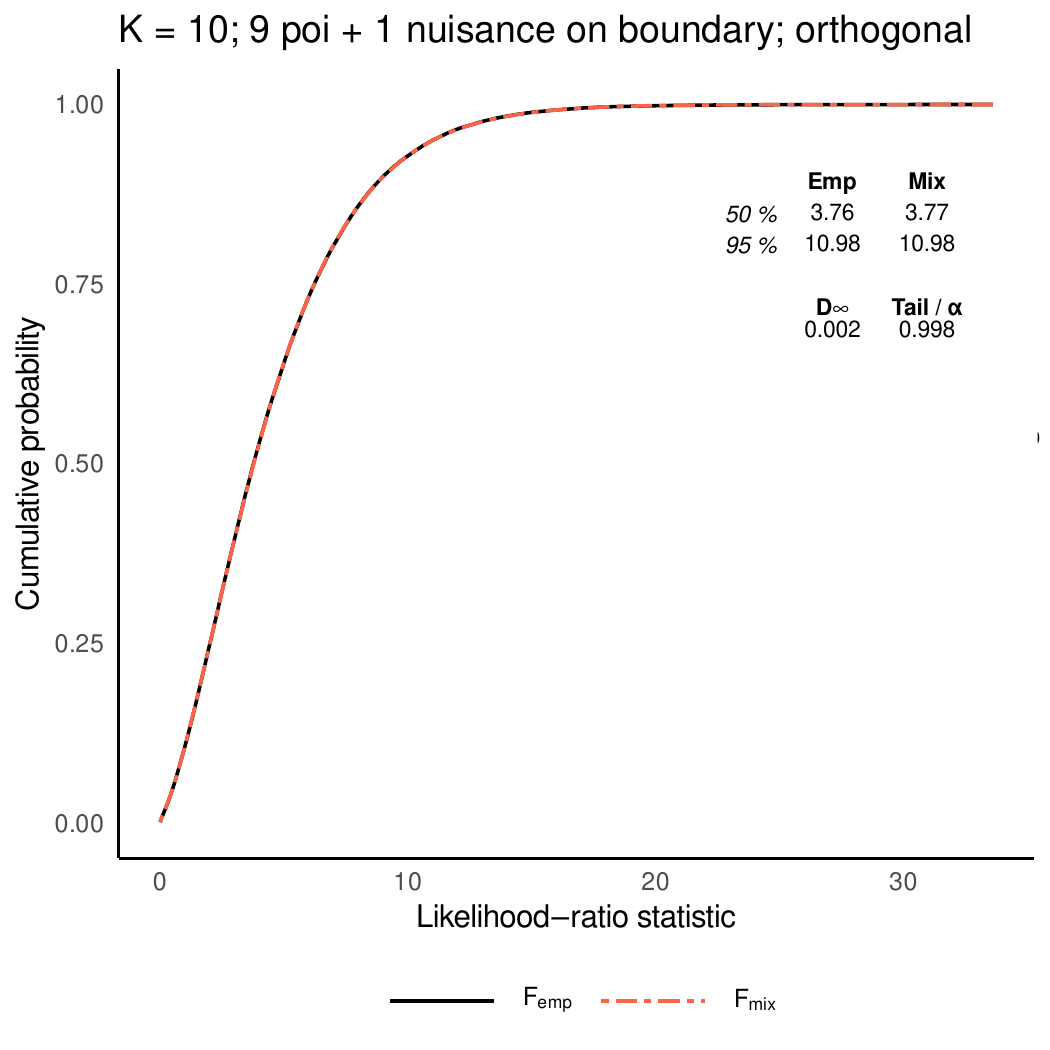}
    \caption{Empirical (solid black) versus theoretical (dashed red) CDFs of $\lrs$ for $K=4,7,10$ in the orthogonal case with one nuisance parameter on the boundary.}
    \label{fig:lemma1}
\end{figure}

\begin{figure}[H]
    \centering
    \includegraphics[width=0.32\linewidth]{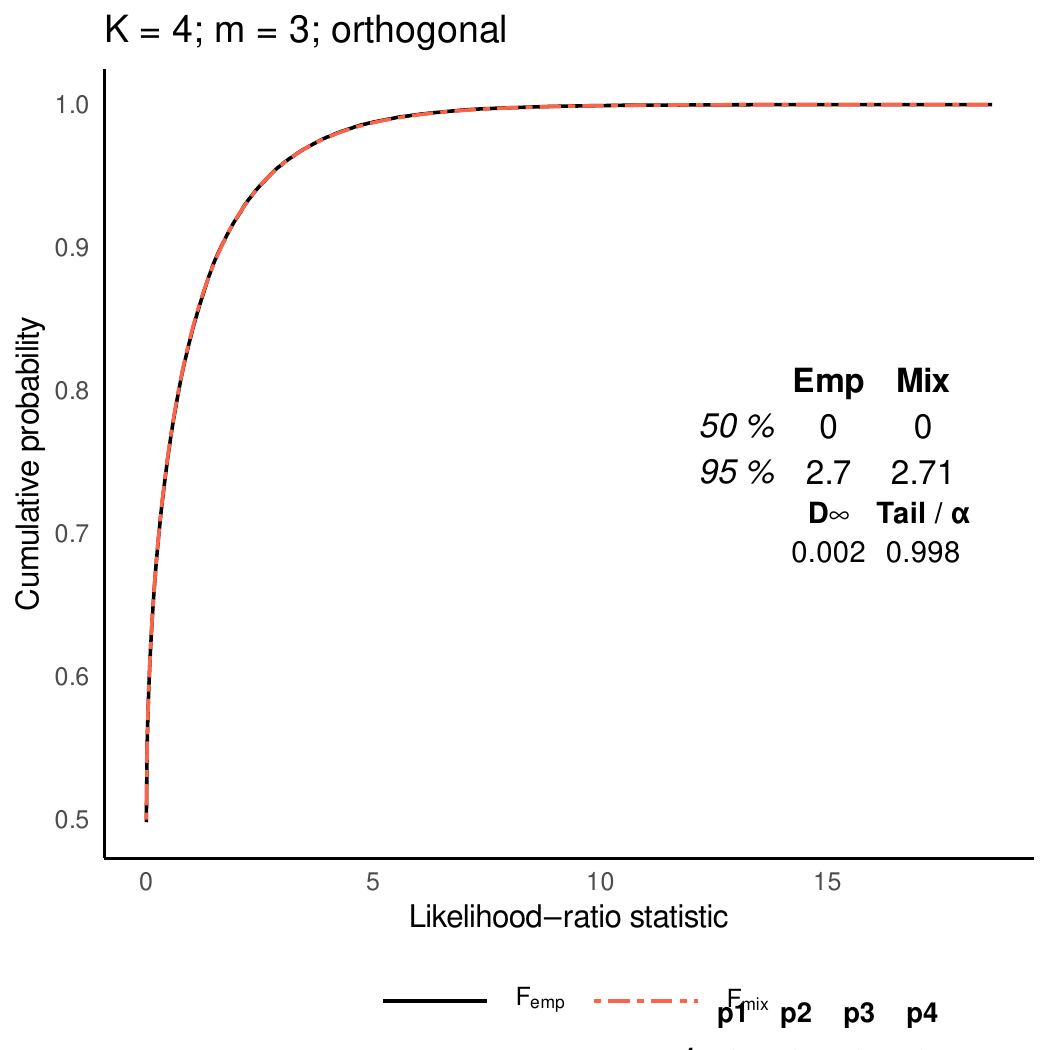}
    \includegraphics[width=0.32\linewidth]{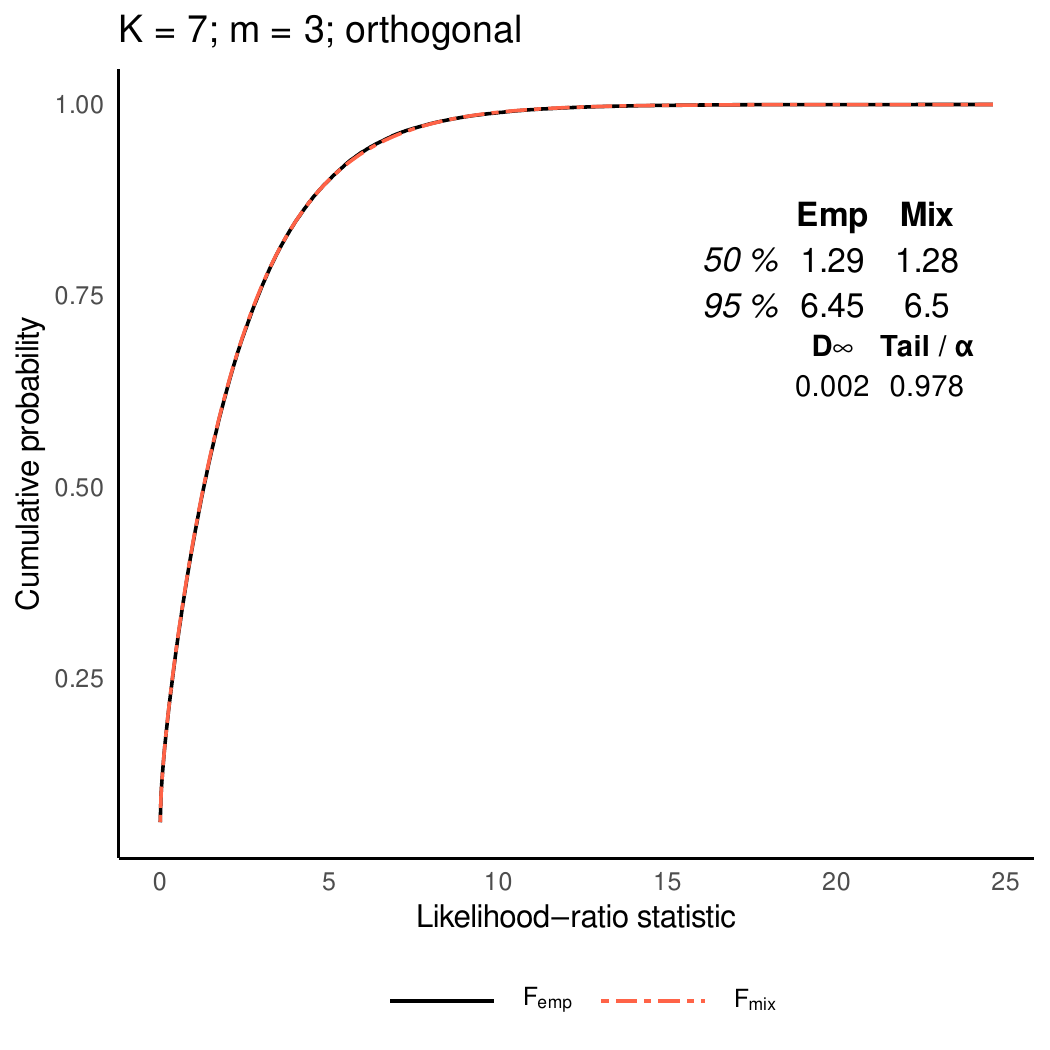}
    \includegraphics[width=0.32\linewidth]{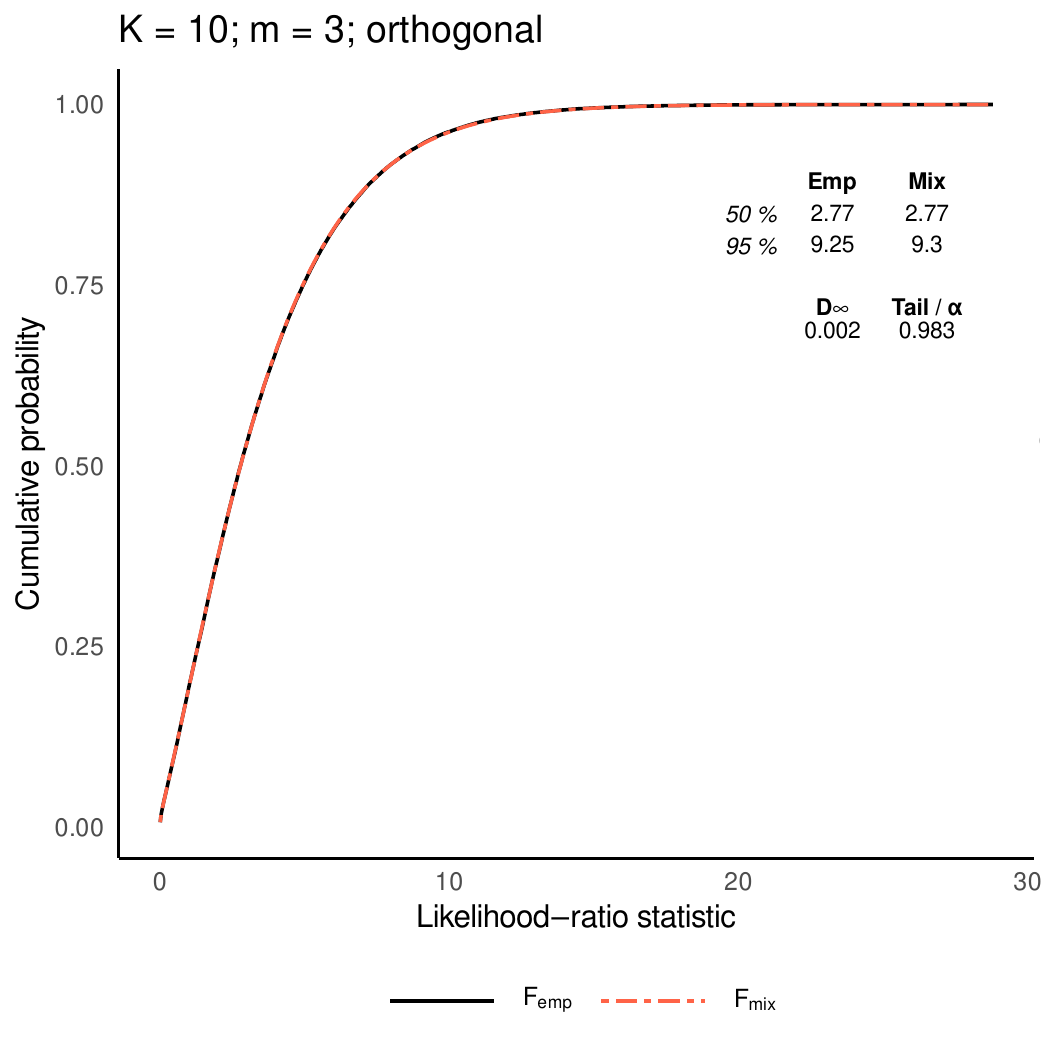}
    \caption{Empirical (solid black) versus theoretical (dashed red) CDFs of $\lrs$ for $K=4,7,10$ in the orthogonal case with three nuisance parameters on the boundary.}
    \label{fig:lemma2}
\end{figure}

\end{document}